\newtheorem{proposition}{Proposition}
\newtheorem{example}{Example}
\newenvironment{proof}{\noindent{\bf Proof:}}{\hfill\fbox{}\vspace*{1mm}}
\providecommand{\DIFdeltex}[1]{{\protect\color{red}\sout{#1}}}
\newif\ifdiff
\newcommand{\del}[1]{\DIFdeltex{#1}}
\newcommand{\del}[1]{}
\begin{document}

\title{\textbf{Optimal Portfolio Liquidation and Dynamic Mean-variance
Criterion}}
\author{ Jia-Wen Gu\thanks{
Department of Mathematical Science, University of Copenhagen, Copenhagen,
Denmark. (email: kaman.jwgu@gmail.com).} \and Mogens Steffensen \thanks{%
Department of Mathematical Science, University of Copenhagen, Copenhagen,
Denmark. (email: mogens@math.ku.dk).}}
\maketitle

\begin{abstract}
In this paper, we consider the optimal portfolio liquidation problem under
the dynamic mean-variance criterion and derive time-consistent solutions in
three important models. We give adapted optimal strategies under a reconsidered
mean-variance subject at any point in time. We get explicit trading strategies in
the basic model and when random pricing signals are incorporated. When we
consider stochastic liquidity and volatility, we construct a generalized HJB
equation under general assumptions for the parameters. We obtain an explicit
solution in stochastic volatility model with a given structure supported by
empirical studies.
\end{abstract}

\section{Introduction}

As quantitative trading is generally used by financial institutions and
hedge funds, the transactions are usually large in size and may involve the
purchase and sale of hundreds of thousands of shares and other securities.
However, quantitative trading is also commonly used by individual investors.
A fundamental part of agency algorithmic trading in equities and other asset
classes is trade scheduling. Given a trade target, that is, a number of
shares that must be bought or sold before a fixed time horizon, trade
scheduling means planning how many shares will be bought or sold by each
time instant between the beginning of trading and the horizon. This is done
so as to optimize some measure of execution quality, usually measured as the
final average execution price relative to some benchmark price. Almgren and
Chriss (2000) consider the execution of portfolio transactions with the aim
of minimizing a combination of volatility risk and transaction costs arising
from permanent and temporary market impact. Kharroubi and Huy\^{e}n
Pham(2010) study the optimal portfolio liquidation problem over a finite
horizon in a limit order book with bid-ask spread and temporary market price
impact penalizing speedy execution trades, respectively. Almgren (2012)
considers the problem of mean-variance optimal agency execution strategies,
when the market liquidity and volatility vary randomly in time. He
constructs an HJB equation relying on "small impact approximation" under
specific assumptions for the stochastic process satisfied by these
parameters.

The mean-variance analysis of Markowitz (1952) has long been recognized as
the cornerstone of modern portfolio theory. Attention is regained by
relating dynamic mean-variance optimization formalistically to quadratic
utility in Korn and Trautmann (1995), Korn (1997) and Zhou and Li (2000).
The same problem is categorized as mean-variance optimization with
pre-commitment (See Christiansen and Steffensen (2013) for detailed
illutrations). Recently, attention has been regained by Basak and Chabakauri
(2010) who challenge the pre-commitment (to the time 0-expected value as the
target of the quadratic utility) assumed by Zhou and Li (2000). They solve
the problem for the so-called sophisticated investor who updates his
non-linear mean-variance objective and takes future updates,
time-consistently, into account. In this paper, the dynamic mean-variance
criterion is applied to the optimal trading problem.

In this paper, we consider the quantitative trading problem under the
dynamic mean-variance criterion and derive time-consistent solutions in
three important models. Our paper contributes to the quantitative trading
literature in various aspects. Firstly, we solve the dynamic mean-variance
quantitative trading problems and derive time-consistent solutions. We give
optimal strategies under a reconsidered mean-variance subject at any point in time.
 Previous
literature seems only give precommitment and deterministic control
solutions. Almgren (2012) gives the trading strategy in the basic model
where it is fixed rather than adaptive.

Secondly, we determine the explicit solutions when random pricing signals
are incorporated. A random pricing signal, gathering the information of the
index data, trading volume and public and private market events, can be
regarded as the indicator of the stock movements. Various methods have been
proposed to study the pricing signal in the literature. Introduction to the
literature is deferred to Section 3. In this paper, the trading strategy is
derived when the random pricing signal is assumed to be a diffusion process.

Thirdly, we consider the trading strategy in the case of stochastic
volatility and liquidity impact. We allow the liquidity and volatility to
vary randomly in time and the determined trading strategy is adapted to the
market state. We give the generalized HJB equations in the stochastic
volatility and liquidity impact models while early study reply on a "small
impact approximation" (Almgren (2012)). We also get an explicit solution in
a stochastic volatility model with a given structure supported by empirical
study.

The remainder of the paper is structured as follows. In Section 2, we give
the trading strategy in the basic model {which is adopted from Section 1 of Almgen (2012)}. In Section 3, optimal strategy is
presented when random pricing signals are incorporated. In Section 4, we
consider stochastic volatility and liquidity model. Conclusions are given in
Section 5.

\section{The Basic Model}

In this section, we consider the basic model adopted from Almgren (2012). In
the model, the price of a stock is govern by the SDE,%
\begin{equation}
dS(t)=\sigma (t)dW(t),  \label{eq4s}
\end{equation}%
where $\sigma (t)$ is the time-dependent volatility of the stock and $W(t)$
is a standard Brownian motion. The price actually received on each trade is
\[
\tilde{S}(t)=S(t)+\eta (t)\upsilon (t),
\]%
where $\eta (t)$ is the coefficient of temporary market impact, also
time-varying and $\upsilon $ is the rate of buying. The volatility and
impact functions $\sigma (t)$ and $\eta (t)$ are assumed to be continuous
functions of $t$ to account for trading seasonality. The trader begins at
time $t=0$ with a purchase target of $x$ shares, which must be completed by
time $t=T$. The number of shares yet to be purchased at time $t$ is the
trajectory $X(t)$, with $X(0)=x$ and $X(T)=0$. Hence
\begin{equation}
\upsilon (t)=-\frac{dX(t)}{dt}.  \label{eq4x}
\end{equation}%
The cost of trading, is the total dollars paid to purchase $x$ shares
subtracting the initial market value:
\[
C=\int_{0}^{T}\tilde{S}(t)\upsilon (t)dt-xS(0).
\]%
By integration by parts, we rewrite
\begin{equation}
C=\int_{0}^{T}\eta (t)\upsilon ^{2}(t)dt+\int_{0}^{T}\sigma (t)X(t)dW(t).
\end{equation}%
We determine the optimal trajectory by the dynamic mean-variance criterion
\[
\min_{\upsilon (s):t\leq s\leq T}E_{t}(C)+\mu Var_{t}(C).
\]%
This is newly proposed by Basak and Chabakuri (2010), who come up with the
dynamic mean-variance criterion challenging the pre-commitment mean-variance
assumed by Korn (1997) and Zhou and Li (2000). Basak and Chabakuri (2010)
use this criterion in asset allocation problem for the so-called
sophisticated investor who updates his nonlinear mean-variance objective and
takes future updates, time-consistently, into account.

To address the optimal trading problem, we define a process $Y$ by
\begin{equation}
\begin{array}{lll}
Y(0) & = & 0, \\
dY(t) & = & \eta (t)\upsilon ^{2}(t)dt+\sigma (t)X(t)dW(t),%
\end{array}
\label{eq4y}
\end{equation}%
such that $C=Y\left( T\right) $. Our objective becomes
\[
\min_{\upsilon (s):t\leq s\leq T}E_{t}(Y(T))+\mu Var_{t}(Y(T)).
\]%
By (\ref{eq4y}), we know
\begin{equation}
\begin{array}{ll}
& E_{t}(Y(T))+\mu Var_{t}(Y(T)) \\
= & Y(t)+E_{t}(\int_{t}^{T}\sigma (s)X(s)dW(s)+\int_{t}^{T}\eta (s)\upsilon
^{2}(s)ds) \\
& +\mu Var_{t}(\int_{t}^{T}\sigma (s)X(s)dW(s)+\int_{t}^{T}\eta (s)\upsilon
^{2}(s)ds)%
\end{array}
\label{cal}
\end{equation}%
Suppose that we are given an optimal trading strategy $\upsilon ^{\ast
}(s),t\leq s\leq T$, and the corresponding value of $Y^{\ast }$. The value
function is defined as
\begin{equation}
J(t)=J(S(t),X(t),Y(t),t)=E_{t}(Y^{\ast }(T))+\mu Var_{t}(Y^{\ast }(T)).
\label{eq4j'}
\end{equation}%
Noting that by the law of total variance,%
\begin{equation}
Var_{t}(Y^{\ast }(T))=E_{t}(Var_{t+\tau }(Y^{\ast }(T)))+Var_{t}(E_{t+\tau
}(Y^{\ast }(T))).  \label{totalvariance'}
\end{equation}%
Plugging (\ref{totalvariance'}) into (\ref{eq4j'}), we obtain
\begin{equation}
J(t)=\min_{\upsilon (s):t\leq s\leq t+\tau }E_{t}(J(t+\tau ))+\mu
Var_{t}(E_{t+\tau }(Y^{\ast }(T))).  \label{eq4j2'}
\end{equation}%
From (\ref{eq4y}), we know
\begin{equation}
E_{t}(Y^{\ast }(T))=Y(t)+f(t),  \label{eq4y*'}
\end{equation}%
where
\begin{equation}
f(t)=E_{t}\left( \int_{t}^{T}\eta (\upsilon ^{\ast }(s))^{2}ds\right) .
\label{eq4f'}
\end{equation}%
Besides, plugging (\ref{eq4y}) into (\ref{eq4j'}) we also know
\begin{equation}
J(t)=Y(t)+C(t),  \label{eq4separation'}
\end{equation}%
where
\begin{equation}
\begin{array}{lll}
C(t) & = & C(S(t),X(t),t) \\
& = & E_{t}\left( \int_{t}^{T}\eta (s)(\upsilon ^{\ast }(s))^{2}ds+\sigma
(s)X^{\ast }(s)dW(s)\right)  \\
&  & +\mu Var_{t}\left( \int_{t}^{T}\eta (s)(\upsilon ^{\ast
}(s))^{2}ds+\sigma (s)X^{\ast }(s)dW(s)\right)
\end{array}
\label{negligible}
\end{equation}%
does not depend on $Y(t)$.

\begin{proposition}
The HJB equation concerning the optimal trading problem is given by
\begin{equation}  \label{hjb'}
0 = \frac{1}{2} C_{s s}\sigma^2 + C_t + \min_{\upsilon} \{\eta
\upsilon^2-C_x \upsilon\} +\mu \sigma^2(x+f_{s})^2
\end{equation}
where the minimum is clearly $\upsilon^* = \frac{C_x}{2 \eta}$ and $f$
satisfies
\begin{equation}  \label{eq4f2'}
0= \eta (\upsilon^*)^2 + \frac{1}{2} f_{s s}\sigma^2 -f_x \upsilon^* +f_t.
\end{equation}
\end{proposition}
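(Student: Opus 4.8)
The plan is to convert the recursive identity \ref{eq4j2'} into a differential equation by letting the look-ahead horizon $\tau \to 0$, which is the standard route from a dynamic programming principle to an HJB equation. Using the separation \ref{eq4separation'}, I would write $J(t+\tau) = Y(t+\tau) + C(S(t+\tau),X(t+\tau),t+\tau)$ and recall from \ref{eq4y} that $Y(t+\tau) = Y(t) + \int_t^{t+\tau}\eta \upsilon^2\, ds + \int_t^{t+\tau}\sigma X\, dW$. First I would expand $E_t(J(t+\tau))$ to first order in $\tau$: the drift of $Y$ contributes $\eta \upsilon^2 \tau$, while Itô's formula applied to $C$ (treating $X$ as a finite-variation process with $dX=-\upsilon\,dt$ and $S$ as the only diffusion) gives $E_t(C(t+\tau)) = C(t) + \left(C_t - C_x \upsilon + \frac{1}{2}C_{ss}\sigma^2\right)\tau + o(\tau)$.

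The second step is the variance term $\mu\, Var_t(E_{t+\tau}(Y^*(T)))$. By \ref{eq4y*'} we have $E_{t+\tau}(Y^*(T)) = Y(t+\tau) + f(t+\tau)$; the $\mathcal{F}_t$-measurable pieces $Y(t)+f(t)$ are constant under $Var_t$, and the finite-variation (drift) increments contribute only at order $\tau^2$. The only contribution at order $\tau$ comes from the martingale parts: the increment $\int_t^{t+\tau}\sigma X\, dW$ from $Y$, and the Itô diffusion $f_s \sigma\, dW$ from $f$ (again $X$ carries no diffusion). These combine into the single stochastic integral $\int_t^{t+\tau}(X + f_s)\sigma\, dW$, whose conditional variance is $\sigma^2 (x + f_s)^2 \tau + o(\tau)$. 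This is precisely the origin of the term $\mu \sigma^2 (x+f_s)^2$, and it is the step I expect to be the main obstacle, since one must argue carefully that the drift contributions to this variance and all cross terms are $O(\tau^2)$ and hence negligible in the limit.

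Substituting both expansions into \ref{eq4j2'}, the pieces $Y(t)+C(t)$ cancel against $J(t)$; dividing by $\tau$ and sending $\tau \to 0$ leaves
\[
0 = \frac{1}{2}C_{ss}\sigma^2 + C_t + \min_{\upsilon}\{\eta \upsilon^2 - C_x \upsilon\} + \mu \sigma^2 (x + f_s)^2,
\]
which is \ref{hjb'}; the inner minimization is a convex quadratic in $\upsilon$, so setting its derivative $2\eta\upsilon - C_x$ to zero yields $\upsilon^* = C_x/(2\eta)$. Finally, for the equation \ref{eq4f2'} governing $f$, I would apply the same small-horizon argument to \ref{eq4f'}: the tower property gives $f(t) = E_t\!\left(\int_t^{t+\tau}\eta(\upsilon^*)^2\, ds\right) + E_t(f(t+\tau))$, and applying Itô to $f$ exactly as above and passing to the limit produces $0 = \eta(\upsilon^*)^2 + \frac{1}{2}f_{ss}\sigma^2 - f_x \upsilon^* + f_t$, completing the proof.
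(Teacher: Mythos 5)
Your proposal is correct and follows essentially the same route as the paper: the authors also combine the recursion \ref{eq4j2'} with the decompositions $J=Y+C$ and $E_t(Y^*(T))=Y(t)+f(t)$, identify the variance contribution as that of the combined martingale increment $\sigma(x+f_s)\,dW$, and obtain \ref{eq4f2'} from \ref{eq4f'} by the same tower-property/Feynman--Kac argument. The only difference is presentational — the paper writes the limit in compact differential notation ($dY$, $dC$, $df$) where you carry out the explicit $\tau$-expansion.
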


\begin{proof}
Combining (\ref{eq4j2'}), (\ref{eq4y*'}) and (\ref{eq4separation'}), we
obtain
\[
0=\min_{\upsilon }E_{t}(dY(t)+dC(t))+\mu Var_{t}(dY(t)+df(t)).
\]%
Using the It$\hat{\mathrm{o}}$'s lemma, (\ref{eq4x}), (\ref{eq4y}) and after inserting $X(t)=x, S(t)=s$,
\[
\begin{array}{lll}
0 & = & \min_{\upsilon }\{\eta \upsilon ^{2}dt+\frac{1}{2}C_{ss}\sigma
^{2}dt-C_{x}\upsilon dt+C_{t}dt+\mu Var_{t}(\sigma xdW+\sigma f_{s}dW)\} \\
& = & \min_{\upsilon }\{\eta \upsilon ^{2}dt+\frac{1}{2}C_{ss}\sigma
^{2}dt-C_{x}\upsilon dt+C_{t}dt+\mu \sigma ^{2}(x+f_{s})^{2}dt\},%
\end{array}%
\]%
hence (\ref{hjb'}) follows and the optimal strategy is given by $\upsilon
^{\ast }=\frac{C_{x}}{2\eta }$. Together with (\ref{eq4f'}), (\ref{eq4f2'})
follows.
\end{proof}

Consequently, the two PDEs for $C$ and $f$ can be derived as follows
\begin{eqnarray}
0 &=&\frac{1}{2}C_{ss}\sigma ^{2}+C_{t}-\frac{C_{x}^{2}}{4\eta }+\mu \sigma
^{2}(x+f_{s})^{2},  \label{eq4pdec'} \\
0 &=&\frac{C_{x}^{2}}{4\eta }+\frac{1}{2}f_{ss}\sigma ^{2}-\frac{C_{x}f_{x}}{%
2\eta }+f_{t}.  \label{eq4pdef'}
\end{eqnarray}

From (\ref{cal}), we see that the optimal strategy $\upsilon ^{\ast }(s)$
does not depend on $Y(t)$ and $S(t)$ for $s\geq t$. Hence, $\upsilon ^{\ast
}(s)$ depends only on $X(s)$ and $s$. Combining (\ref{eq4x}), $\upsilon
^{\ast }$ is a deterministic control and $X$ is a deterministic process.
Therefore, (\ref{eq4pdec'}) reduced to
\begin{equation}
0=\mu \sigma ^{2}x^{2}+C_{t}-\frac{C_{x}^{2}}{4\eta }.  \label{eq4pde2}
\end{equation}%
The initial data for the PDE (\ref{eq4pde2}) is a local asymptotic
condition. Considering (\ref{negligible}), near expiration $T$, the terms
with $dW$ become negligible, then we must liquidate on a linear trajectory $%
\upsilon =x/(T-t)$ and hence the function $C$ has local behavior
\[
C\sim \frac{\eta (t)x^{2}}{T-t}+O(T-t),\ T-t\rightarrow 0.
\]%
We look for a candidate solution to HJB in the form $C=x^{2}L(t).$ Plugging
into the HJB, we see that $L$ should satisfy the ODE:
\[
0=\mu \sigma (t)^{2}+L^{\prime }(t)-\frac{L^{2}(t)}{\eta (t)}
\]%
and
\[
L\sim \frac{\eta (t)}{T-t}+O(T-t),\ T-t\rightarrow 0.
\]%
The optimal strategy is given by
\[
\upsilon ^{\ast }(t, X(t))=\frac{1}{\eta(t)}X(t)L(t).
\]
If we restrict $\eta $ and $\theta $ to be constant,
\[
L(t)=\sqrt{\mu \eta \sigma ^{2}}\coth \left( \sqrt{\frac{\mu \sigma ^{2}}{%
\eta }}(T-t)\right) .
\]%
The optimal strategy can be expressed as
\[
\upsilon ^{\ast }(t, X(t))=X(t)\sqrt{\frac{\mu \sigma ^{2}}{\eta }}\coth \left(
\sqrt{\frac{\mu \sigma ^{2}}{\eta }}(T-t)\right) .
\]
{
This is the same strategy as the one obtained by Almgren (2012). However, it is important to realize that the problem formulations are different. Whereas Almgren (2012) finds the best deterministic strategy for a classical mean-variance problem, we find the best stochastic strategy for a time-consistent formulation of the mean-variance problem. Since the best strategy is deterministic, time-consistency does not distinguish the two problems and the two strategies coincide. However, when we proceed and add randomness from signals, volatility, and liquidity, this coincidence is lost, since our strategies become adapted and reflect specifically the time-consistency of the problem formulation.}

\section{Random Pricing Signals}

{We consider the trading problem of mean-variance optimal agency execution
strategies, when a random pricing signal is included. A random pricing
signal, gathering the information of index data, trading volume and public
and private market events, can be regarded as the indicator of the stock
movement. Various research work have considered pricing signals for the
support and prediction of limit and market order placement strategies of
traders. Interested readers are advised to refer to Milgrom and Stokey
(1982) and Suominen (2001). }

The model with random pricing signals enhances the trading quantity for two
reasons. First, the incorporation of pricing signals relates stock returns
to market returns. One can identify the pricing signals by investigating
statistical and normal relationships between an asset's returns and market
factors. Some notable examples of understanding the relationships between
stock returns and market returns include the CAMP model by Sharpe (1964),
the common risk factor model by Fama and French (1993), the Extended
four-factor model by Carhart (1997) and the GARCH model by Lamoureux and
Lastrapes (1990). Second, the model replies on the belief that extreme price
movements are caused by temporary liquidity shortage and manipulation and
would be followed by a price reversal, which is consistent with the market
behavior. In our model, the price reversal is described be a reverting
process with rate $\theta (t)$.

Another example of the incorporation of random pricing signals is pairs
trading. The strategy monitors performance of two historically correlated
securities, e.g. Coca-Cola (KO) and Pepsi (PEP). When the correlation
between the two securities temporarily weakens, i.e. one stock moves up
while the other moves down, the pairs trade would be to short the
outperforming stock and to long the underperforming one, betting that the
"spread" between the two would eventually converge (See Mudchanatongsuk et
al. (2008)). One can identify the pricing signals by investigating the
average stock movements of the pair of stocks and finding the optimal
trading strategy under the mean-variance criterion by our approach.

To be more specific, the price of a stock is govern by SDEs:
\begin{equation}
\begin{array}{lll}
dS(t) & = & \theta (t)(\alpha (t)-S(t))dt+\sigma _{1}(t)dW_{1}(t), \\
d\alpha (t) & = & \sigma _{2}(t)dW_{2}(t),%
\end{array}
\label{eq4signal}
\end{equation}%
where $\alpha (t)$ is a random pricing signal, $\theta (t)$ the rate by
which the shock dissipate and the variable reverts towards the signal, $%
\sigma _{1}(t)$ is the volatility of the stock and $W_{1}(t)$ and $W_{2}(t)$
are independent standard Brownian motions. We assume $\theta (t)$, $\eta (t)$%
, $\sigma _{1}(t)$ and $\sigma _{2}(t)$ are continuously time-varying to
account for trading seasonality.

The cost of trading, is the total dollars paid to purchase $X$ shares
subtracting the initial market value:
\[
\begin{array}{lll}
C & = & \int_{0}^{T}\tilde{S}(t)\upsilon (t)dt-xS(0) \\
& = & \int_{0}^{T}\theta (t)X(t)(\alpha (t)-S(t))dt+\int_{0}^{T}\eta
(t)\upsilon ^{2}(t)dt+\int_{0}^{T}\sigma _{1}(t)X(t)dW_{1}(t).%
\end{array}%
\]%
We determine the optimal trajectory by the dynamic mean-variance criterion
\[
\min_{\upsilon (s):t\leq s\leq T}E_{t}(C)+\mu Var_{t}(C).
\]%
We now follow the recipe presented in the previous section and define a
process $Y$ as
\begin{equation}
\begin{array}{lll}
Y(0) & = & 0, \\
dY(t) & = & \eta (t)\upsilon ^{2}(t)dt+\theta (t)X(t)(\alpha
(t)-S(t))dt+\sigma _{1}(t)X(t)dW_{1}(t),%
\end{array}
\label{eq4y2}
\end{equation}%
such that $Y\left( T\right) =C$. Our objective becomes
\[
\min_{\upsilon (s):t\leq s\leq T}E_{t}(Y(T))+\mu Var_{t}(Y(T)).
\]%
Suppose that we are given an optimal trading strategy $\upsilon ^{\ast
}(s),t\leq s\leq T$, and the corresponding value of $Y^{\ast }$. The value
function is defined as
\begin{equation}
J(t)=J(S(t),\alpha (t),X(t),Y(t),t)=E_{t}(Y^{\ast }(T))+\mu Var_{t}(Y^{\ast
}(T)).  \label{eq4j}
\end{equation}%
By the law of total variance, we obtain
\begin{equation}
J(t)=\min_{\upsilon (s):t\leq s\leq t+\tau }E_{t}(J(t+\tau ))+\mu
Var_{t}(E_{t+\tau }(Y^{\ast }(T))).  \label{eq4j2}
\end{equation}%
From (\ref{eq4y2}), we know
\begin{equation}
E_{t}(Y^{\ast }(T))=Y(t)+f(t),  \label{eq4y*}
\end{equation}%
where
\begin{equation}
f(t)=E_{t}\left( \int_{t}^{T}\theta (s)(\alpha (s)-S(s))X^{\ast }(s)ds+\eta
(s)(\upsilon ^{\ast }(s))^{2}ds\right) .  \label{eq4f}
\end{equation}%
Besides, plugging (\ref{eq4y2}) into (\ref{eq4j}) we also know
\begin{equation}
J(t)=Y(t)+C(t),  \label{eq4separation}
\end{equation}%
where
\[
\begin{array}{lll}
C(t) & = & C(S(t),\alpha (t),X(t),t) \\
& = & E_{t}\left( \int_{t}^{T}\theta (s)(\alpha (s)-S(s))X^{\ast }(s)ds+\eta
(s)(\upsilon ^{\ast }(s))^{2}ds+\sigma _{1}(s)X^{\ast }(s)dW_{1}(s)\right)
\\
&  & +\mu Var_{t}\left( \int_{t}^{T}\theta (s)(\alpha (s)-S(s))X^{\ast
}(s)ds+\eta (s)(\upsilon ^{\ast }(s))^{2}ds+\sigma _{1}(s)X^{\ast
}(s)dW_{1}(s)\right),
\end{array}%
\]%
does not depend on $Y(t)$. To proceed, we reduce the dimension of the HJB by
defining a new variable:
\[
\beta (t)=S(t)-\alpha (t),
\]%
the difference between the stock price and the observable signal. We can
easily see that function $f(t)$ and $C(t)$ depend only on $\beta (t),X(t),t$
and
\begin{eqnarray}
dY(t) &=&-\theta (t)\beta (t)X(t)dt+\eta (t)\upsilon ^{2}(t)dt+\sigma
_{1}(t)X(t)dW_{1}(t),  \label{eq4y3} \\
d\beta (t) &=&-\theta (t)\beta (t)dt+\sigma _{1}(t)dW_{1}(t)-\sigma
_{2}(t)dW_{2}(t).  \label{eq4beta}
\end{eqnarray}

\begin{proposition}
The HJB equation concerning the optimal trading problem with random signal
is given by
\begin{equation}  \label{hjb}
0 = -\theta x \beta - C_{\beta} \theta \beta +\frac{1}{2} C_{\beta
\beta}(\sigma^2_1 +\sigma^2_2 ) + C_t + \min_{\upsilon} \{\eta
\upsilon^2-C_x \upsilon\} +\mu \sigma^2_1(x+f_{\beta})^2 +\mu \sigma^2_2
f^2_{\beta},
\end{equation}
where the minimum is clearly $\upsilon^* = \frac{C_x}{2 \eta}$ and $f$
satisfies
\begin{equation}  \label{eq4f2}
0= -\theta x \beta + \eta (\upsilon^*)^2 -f_{\beta} \theta \beta + \frac{1}{2%
} f_{\beta \beta}(\sigma^2_1 +\sigma^2_2 ) -f_x \upsilon^* +f_t.
\end{equation}
\end{proposition}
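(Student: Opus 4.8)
The plan is to follow the same recipe as in the proof of Proposition~1, now with the enlarged state and the dimension reduction $\beta = S - \alpha$ introduced above, under which both $C$ and $f$ are functions of $(\beta,x,t)$ only and the dynamics reduce to (\ref{eq4y3})--(\ref{eq4beta}) together with the deterministic $dX = -\upsilon\,dt$. First I would combine the dynamic-programming relation (\ref{eq4j2}) with the decomposition $J(t)=Y(t)+C(t)$ from (\ref{eq4separation}) and the representation $E_t(Y^*(T))=Y(t)+f(t)$ from (\ref{eq4y*}). Subtracting $J(t)$ and passing to the infinitesimal limit $\tau\downarrow 0$ yields the same local optimality condition as in the basic model,
\[
0 = \min_{\upsilon} E_t(dY + dC) + \mu\, Var_t(dY + df).
\]

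Second I would apply It\^{o}'s lemma to $C(\beta,x,t)$ and $f(\beta,x,t)$, using (\ref{eq4beta}) and the fact that $(d\beta)^2 = (\sigma_1^2+\sigma_2^2)\,dt$ because $W_1$ and $W_2$ are independent. The drift of $E_t(dY+dC)$ collects, after inserting $X(t)=x$, the terms $-\theta x\beta + \eta\upsilon^2 + C_t - C_\beta\theta\beta - C_x\upsilon + \frac{1}{2} C_{\beta\beta}(\sigma_1^2+\sigma_2^2)$, which reproduces the non-variance part of (\ref{hjb}).

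The step that genuinely departs from the basic model, and the one I expect to require the most care, is the variance term. The diffusion part of $dY$ is $\sigma_1 x\,dW_1$, while the diffusion part of $df$ is $f_\beta(\sigma_1\,dW_1 - \sigma_2\,dW_2)$. Adding them, the $dW_1$-coefficient becomes $\sigma_1(x+f_\beta)$ and the $dW_2$-coefficient becomes $-\sigma_2 f_\beta$; since the two Brownian motions are independent the cross term drops out and
\[
Var_t(dY+df) = \big(\sigma_1^2(x+f_\beta)^2 + \sigma_2^2 f_\beta^2\big)\,dt.
\]
This is precisely the origin of the two separate penalty terms in (\ref{hjb}). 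Collecting everything and dividing by $dt$ gives (\ref{hjb}); minimizing the quadratic $\eta\upsilon^2 - C_x\upsilon$ over $\upsilon$ returns $\upsilon^* = C_x/(2\eta)$.

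Finally, to obtain (\ref{eq4f2}) I would use the probabilistic representation (\ref{eq4f}). Writing $\alpha - S = -\beta$ gives $f(t)=E_t\big(\int_t^T (-\theta\beta X^* + \eta(\upsilon^*)^2)\,ds\big)$, so the process $f(t)+\int_0^t(-\theta\beta X^* + \eta(\upsilon^*)^2)\,ds$ is a martingale. Setting its It\^{o} drift to zero, again with $(d\beta)^2=(\sigma_1^2+\sigma_2^2)\,dt$, produces exactly (\ref{eq4f2}). The only real obstacle is the bookkeeping of the two noise sources, keeping track that $W_1$ is shared between $Y$ and $\beta$ whereas $W_2$ enters through $\beta$ alone, which is what splits the variance into the $(x+f_\beta)^2$ and $f_\beta^2$ contributions.
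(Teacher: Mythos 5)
Your proposal is correct and follows essentially the same route as the paper: the same dynamic-programming identity $0=\min_{\upsilon}E_t(dY+dC)+\mu\,Var_t(dY+df)$, the same It\^{o} computation in the reduced state $(\beta,x,t)$ with $(d\beta)^2=(\sigma_1^2+\sigma_2^2)\,dt$, and the same splitting of the variance into the $\sigma_1^2(x+f_\beta)^2$ and $\sigma_2^2 f_\beta^2$ pieces via independence of $W_1$ and $W_2$. Your martingale argument for (\ref{eq4f2}) simply makes explicit the step the paper compresses into ``together with (\ref{eq4f}), (\ref{eq4f2}) follows.''
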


\begin{proof}
Combining (\ref{eq4j2}), (\ref{eq4y*}) and (\ref{eq4separation}), we obtain
\[
0 = \min_{\upsilon} E_t(dY(t) +dC(t)) + \mu Var_t (dY(t)+df(t)).
\]
Using the It$\hat{\mathrm{o}}$'s lemma, (\ref{eq4x}), (\ref{eq4y3}), (\ref%
{eq4beta}) and after inserting $X(t)=x, \beta(t)=\beta$,
\[
\begin{array}{lll}
0 & = & \min_{\upsilon} \{-\theta x \beta dt +\eta \upsilon^2 dt -C_{\beta}
\theta \beta dt + \frac{1}{2} C_{\beta \beta}(\sigma^2_1 +\sigma^2_2 )dt
-C_x \upsilon dt +C_t dt + \\
&  & \mu Var_t (\sigma_1 x dW_1 + \sigma_1 f_{\beta}dW_1 - \sigma_2
f_{\beta}dW_2)\} \\
& = & \min_{\upsilon} \{-\theta x \beta dt +\eta \upsilon^2 dt -C_{\beta}
\theta \beta dt + \frac{1}{2} C_{\beta \beta}(\sigma^2_1 +\sigma^2_2 )dt
-C_x \upsilon dt +C_t dt + \\
&  & \mu \sigma^2_1 (x+f_{\beta})^2dt +\mu \sigma^2_2 f^2_{\beta}dt \},%
\end{array}
\]
hence (\ref{hjb}) follows and the optimal strategy is given by $\upsilon^* =
\frac{C_x}{2 \eta}$. Together with (\ref{eq4f}), (\ref{eq4f2}) follows.
\end{proof}

Consequently, the two PDEs for $C$ and $f$ can be derived as follows
\begin{eqnarray}
0 &=&-\theta x\beta -C_{\beta }\theta \beta +\frac{1}{2}C_{\beta \beta
}(\sigma _{1}^{2}+\sigma _{2}^{2})+C_{t}-\frac{C_{x}^{2}}{4\eta }+\mu \sigma
_{1}^{2}(x+f_{\beta })^{2}+\mu \sigma _{2}^{2}f_{\beta }^{2}.  \label{eq4pdec}
\\
0 &=&-\theta x\beta +\frac{C_{x}^{2}}{4\eta }-f_{\beta }\theta \beta +\frac{1%
}{2}f_{\beta \beta }(\sigma _{1}^{2}+\sigma _{2}^{2})-\frac{C_{x}f_{x}}{%
2\eta }+f_{t}.  \label{eq4pdef}
\end{eqnarray}%
Similar to the local asymptotic condition in the basic model, near
expiration, one must liquidate on a linear trajectory. Therefore,
\[
C\sim \frac{\eta (t)x^{2}}{T-t}+O(T-t),f\sim \frac{\eta (t)x^{2}}{T-t}%
+O(T-t),\ T-t\rightarrow 0.
\]%
We look for a candidate solution to PDEs in the form
\begin{equation}
\begin{array}{lll}
C & = & x^{2}D(t)+\beta ^{2}E(t)+x\beta F(t)+xG(t)+\beta H(t)+I(t), \\
f & = & x^{2}L(t)+\beta ^{2}M(t)+x\beta N(t)+xO(t)+\beta P(t)+Q(t).%
\end{array}
\label{eq4cf}
\end{equation}%
We plug (\ref{eq4cf}) into (\ref{eq4pdec}) (\ref{eq4pdef}) and obtain a
system of ODEs.
\begin{equation}
\begin{array}{llll}
-\frac{dD}{dt} & = & -\frac{1}{\eta }D^{2}+\mu \sigma _{1}^{2}(1+N)^{2}+\mu
\sigma _{2}^{2}N^{2}, & D\sim \frac{\eta }{T-t},t\rightarrow T, \\
-\frac{dE}{dt} & = & -2\theta E-\frac{1}{4\eta }F^{2}+4\mu (\sigma
_{1}^{2}+\sigma _{2}^{2})M^{2}, & E(T)=0, \\
-\frac{dF}{dt} & = & -\theta -\theta F-\frac{1}{\eta }DF+4\mu \sigma
_{1}^{2}(1+N)M+4\mu \sigma _{2}^{2}MN, & F(T)=0, \\
-\frac{dG}{dt} & = & -\frac{1}{\eta }DG+2\mu (\sigma _{1}^{2}+\sigma
_{2}^{2})NP, & G(T)=0, \\
-\frac{dH}{dt} & = & -\theta H-\frac{1}{2\eta }FG+4\mu (\sigma
_{1}^{2}+\sigma _{2}^{2})MP, & H(T)=0, \\
-\frac{dI}{dt} & = & (\sigma _{1}^{2}+\sigma _{2}^{2})E-\frac{1}{4\eta }%
G^{2}+\mu (\sigma _{1}^{2}+\sigma _{2}^{2})P^{2}, & I(T)=0, \\
-\frac{dL}{dt} & = & \frac{1}{\eta }D^{2}-\frac{2}{\eta }DL, & L\sim \frac{%
\eta }{T-t},t\rightarrow T, \\
-\frac{dM}{dt} & = & \frac{1}{4\eta }F^{2}-2\theta M-\frac{1}{2\eta }FN, &
M(T)=0, \\
-\frac{dN}{dt} & = & -\theta +\frac{1}{\eta }DF-\theta N-\frac{1}{\eta }%
(DN+FL), & N(T)=0, \\
-\frac{dO}{dt} & = & \frac{1}{\eta }(DG-DO-GL), & O(T)=0, \\
-\frac{dP}{dt} & = & -\theta P+\frac{1}{2\eta }(FG-FO-GN), & P(T)=0, \\
-\frac{dQ}{dt} & = & \frac{1}{4\eta }G^{2}-(\sigma _{1}^{2}+\sigma
_{2}^{2})M-\frac{1}{2\eta }GO, & Q(T)=0.%
\end{array}
\label{ODEs}
\end{equation}%
From (\ref{ODEs}), one can find that solutions to $G,O,P$ are trivial, i.e.,
$G=O=P=0$. Hence the optimal strategy becomes
\[
\upsilon ^{\ast }=\frac{1}{2\eta }(2Dx+F\beta ),
\]%
which only evolves $D$ and $F$. Therefore, (\ref{ODEs}) can be reduced to (%
\ref{ODEss}), which gives the optimal strategy.
\begin{equation}
\begin{array}{llll}
-\frac{dD}{dt} & = & -\frac{1}{\eta }D^{2}+\mu \sigma _{1}^{2}(1+N)^{2}+\mu
\sigma _{2}^{2}N^{2}, & D\sim \frac{\eta }{T-t},t\rightarrow T, \\
-\frac{dF}{dt} & = & -\theta -\theta F-\frac{1}{\eta }DF+4\mu \sigma
_{1}^{2}(1+N)M+4\mu \sigma _{2}^{2}MN, & F(T)=0, \\
-\frac{dL}{dt} & = & \frac{1}{\eta }D^{2}-\frac{2}{\eta }DL, & L\sim \frac{%
\eta }{T-t},t\rightarrow T, \\
-\frac{dM}{dt} & = & \frac{1}{4\eta }F^{2}-2\theta M-\frac{1}{2\eta }FN, &
M(T)=0, \\
-\frac{dN}{dt} & = & -\theta +\frac{1}{\eta }DF-\theta N-\frac{1}{\eta }%
(DN+FL), & N(T)=0.%
\end{array}
\label{ODEss}
\end{equation}
To summarize, the optimal strategy is given by
\[
\upsilon^*(t, X(t), \beta(t)) =\frac{1}{2 \eta(t)} (2 D(t)X(t)+F(t)\beta(t)).
\]

\begin{example}
If we set $\theta=0$, then this model reduces to be the basic one. From (\ref%
{ODEs}), one can find solutions to $F, M, N$ are trivial if $\theta=0$,
i.e., $F=M=N=0$. Consequently,
\[
\begin{array}{llll}
-\frac{d D}{d t} & = & -\frac{1}{\eta}D^2 + \mu \sigma^2_1, & D \sim \frac{%
\eta}{T-t}, t \rightarrow T, \\
-\frac{d L}{d t} & = & \frac{1}{\eta} D^2 -\frac{2}{\eta} DL, & L \sim \frac{%
\eta}{T-t}, t \rightarrow T. \\
&  &  &
\end{array}
\]
If we restrict $\sigma_1$ and $\eta$ to be constant, we obtain,
\[
D(t)= \sqrt{\mu \eta \sigma^2_1} \coth \left( \sqrt{\frac{\mu \sigma^2_1}{%
\eta}}(T-t)\right),
\]
and
\[
\upsilon^*(t, X(t))=X(t)\sqrt{\frac{\mu \sigma^2_1}{\eta}} \coth \left( \sqrt{%
\frac{\mu \sigma^2_1}{\eta}}(T-t)\right).
\]
\end{example}

\section{Stochastic Liquidity and Volatility}

In this section, we consider the liquidity impact $\eta (t)$ and $\sigma (t)$
in the basic model to de dependent on the trading position $X(t)$ and an
independent variable $\xi (t)$ representing the \textquotedblleft market
state", i.e.,
\[
d\xi (t)=a_{\xi }(t)dt+b_{\xi }(t)dB\left( t\right),
\]%
where $a_{\xi }$ and $b_{\xi }$ are known function of $t$ and $B$ is a
Brownian motion independent of $W$. A derivation corresponding to the
previous sections leads to the value function
\[
J(t)=J(\xi (t),X(t),t)=E_{t}(Y^{\ast }(T))+\mu Var_{t}(Y^{\ast
}(T))=Y(t)+C(t),
\]%
where
\[
C(t)=C(\xi (t),X(t),t).
\]%
We also have
\[
E_{t}(Y^{\ast }(T))=Y(t)+f(t),
\]%
where
\begin{equation}
f(t)=f(\xi (t),X(t),t)=E_{t}\left[ \int_{t}^{T}\eta (t)(\upsilon ^{\ast
}(t))^{2}dt\right].   \label{eq4f4}
\end{equation}

\begin{proposition}
The HJB equation concerning the optimal trading problem is given by
\begin{equation}  \label{hjb4}
0 = \frac{1}{2} C_{\xi \xi}b^2_{\xi} + C_{\xi}a_{\xi} + C_t +
\min_{\upsilon} \{\eta \upsilon^2-C_x \upsilon\} +\mu \sigma^2x^2 + \mu
b_{\xi} f^2_{\xi},
\end{equation}
where the minimum is clearly $\upsilon^* = \frac{C_x}{2 \eta}$ and $f$
satisfies
\begin{equation}  \label{eq4f3}
0= \eta (\upsilon^*)^2 + \frac{1}{2} f_{\xi \xi}b^2_{\xi} +f_{\xi}
a_{\xi}-f_x \upsilon^* +f_t.
\end{equation}
\end{proposition}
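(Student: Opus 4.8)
The plan is to follow the dynamic-programming recipe already used for the two previous propositions, the only change being that the relevant state is now $(\xi ,X,t)$ and that $\eta$ and $\sigma$ may depend on $(\xi ,X)$. Combining the present-state analogue of the recursive relations (\ref{eq4j2'}) and (\ref{eq4j2}), the separation $J(t)=Y(t)+C(t)$, and the drift identity $E_t(Y^\ast (T))=Y(t)+f(t)$, one obtains over an infinitesimal step
\[
0=\min_{\upsilon }E_t\bigl(dY(t)+dC(t)\bigr)+\mu \, Var_t\bigl(dY(t)+df(t)\bigr).
\]
Everything then reduces to computing the two expectations and the variance by It\^{o}'s lemma.

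First I would expand $C=C(\xi ,X,t)$. Because $X$ evolves deterministically through $dX=-\upsilon \, dt$, it carries no diffusion, so $(dX)^2=0$ and $d\xi \, dX=0$ to first order; the only second-order term is $\frac{1}{2}C_{\xi \xi }(d\xi )^2=\frac{1}{2}C_{\xi \xi }b_\xi ^2\, dt$, with no $C_{xx}$ or $C_{\xi x}$ contributions. Collecting drifts gives $E_t(dC)=\bigl(\frac{1}{2}C_{\xi \xi }b_\xi ^2+C_\xi a_\xi -C_x\upsilon +C_t\bigr)dt$ and $E_t(dY)=\eta \upsilon ^2\, dt$, which reproduce all the non-variance terms on the right-hand side of (\ref{hjb4}).

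The decisive and genuinely model-specific step is the variance term. The diffusive part of $dY$ is $\sigma X\, dW$, driven by the price Brownian motion $W$, whereas the diffusive part of $df$ is $f_\xi b_\xi \, dB$, driven by the market-state Brownian motion $B$. Since $B$ is assumed independent of $W$, the cross-covariance vanishes and
\[
Var_t\bigl(dY(t)+df(t)\bigr)=\bigl(\sigma ^2X^2+b_\xi ^2f_\xi ^2\bigr)dt ,
\]
so the two risk sources decouple into the separate terms $\mu \sigma ^2x^2$ and $\mu b_\xi ^2f_\xi ^2$ after inserting $X(t)=x$. This is exactly where the present model differs structurally from the basic and signal models: there $Y$ and $f$ shared a single Brownian motion and the variance collapsed into a coherent square such as $\sigma ^2(x+f_\beta )^2$, whereas here the independence $W\perp B$ keeps the $x^2$ contribution and the $f_\xi ^2$ contribution apart. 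Making sure this independence bookkeeping is correct is the main thing to watch; the remaining minimization is immediate, since $\eta \upsilon ^2-C_x\upsilon$ is convex in $\upsilon$ with minimizer $\upsilon ^\ast =C_x/(2\eta )$, giving (\ref{hjb4}).

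Finally, to establish (\ref{eq4f3}) I would use the probabilistic characterization (\ref{eq4f4}). Since $f(t)=E_t[\int_t^T\eta (\upsilon ^\ast )^2\, ds]$, the process $f(t)+\int_0^t\eta (\upsilon ^\ast )^2\, ds$ is a martingale, so the drift of $f$ together with the running cost $\eta (\upsilon ^\ast )^2$ must vanish. Expanding $f=f(\xi ,X,t)$ by It\^{o}'s lemma exactly as for $C$ and setting the total drift to zero yields $0=\eta (\upsilon ^\ast )^2+\frac{1}{2}f_{\xi \xi }b_\xi ^2+f_\xi a_\xi -f_x\upsilon ^\ast +f_t$, which is (\ref{eq4f3}). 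No new difficulty arises beyond the same It\^{o} expansion, so the only real care needed in the whole argument is the independence structure used in the variance computation.
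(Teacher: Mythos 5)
Your proposal is correct and follows essentially the same route as the paper: the identity $0=\min_{\upsilon }E_{t}(dY+dC)+\mu \, Var_{t}(dY+df)$, the It\^{o} expansion of $C$ and $f$ in $(\xi ,x,t)$, and the decoupling of the variance into $\sigma ^{2}x^{2}+b_{\xi }^{2}f_{\xi }^{2}$ via the independence of $W$ and $B$. Your derivation (like the paper's own proof) yields $\mu b_{\xi }^{2}f_{\xi }^{2}$, confirming that the $\mu b_{\xi }f_{\xi }^{2}$ appearing in the displayed equation of the proposition is a typographical slip.
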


\begin{proof}
As usual we have%
\[
0=\min_{\upsilon }E_{t}(dY(t)+dC(t))+\mu Var_{t}(dY(t)+df(t)).
\]%
Using the It$\hat{\mathrm{o}}$'s lemma and after inserting $X(t)=x, \xi(t)=\xi$,
\[
\begin{array}{lll}
0 & = & \min_{\upsilon }\{\eta \upsilon ^{2}dt+\frac{1}{2}C_{\xi \xi }b_{\xi
}^{2}dt+C_{\xi }a_{\xi }dt-C_{x}\upsilon dt+C_{t}dt+\mu Var_{t}(\sigma
xdW+\sigma f_{\xi }b_{\xi }dB)\} \\
& = & \min_{\upsilon }\{\eta \upsilon ^{2}dt+\frac{1}{2}C_{\xi \xi }b_{\xi
}^{2}dt+C_{\xi }a_{\xi }dt-C_{x}\upsilon dt+C_{t}dt+\mu \sigma
^{2}x^{2}dt+\mu f_{\xi }^{2}b_{\xi }^{2}\},%
\end{array}%
\]%
hence (\ref{hjb4}) follows and the optimal strategy is given by $\upsilon
^{\ast }=\frac{C_{x}}{2\eta }$. Together with (\ref{eq4f4}), (\ref{eq4f3})
follows.
\end{proof}

Consequently, the two PDEs for $C$ and $f$ can be derived as follows
\begin{eqnarray}
0 &=&\frac{1}{2}C_{\xi \xi }b_{\xi }^{2}+C_{\xi }a_{\xi }+C_{t}-\frac{%
C_{x}^{2}}{4\eta }+\mu \sigma ^{2}x^{2}+\mu f_{\xi }^{2}b_{\xi }^{2},
\label{eq4pdec4} \\
0 &=&\frac{C_{x}^{2}}{4\eta }+\frac{1}{2}f_{\xi \xi }b_{\xi }^{2}+f_{\xi
}a_{\xi }-\frac{C_{x}f_{x}}{2\eta }+f_{t}.  \label{eq4pdef4}
\end{eqnarray}

Finding an explicit solution to the system of PDEs is difficult but we can
still find one under some assumptions.

\begin{example}
Here we provide an example to capture stochastic volatility and time-varying
liquidity impact, where we assume $\sigma(t)=\sqrt{\frac{\xi(t)}{ X(t)/(T-t)}%
}$. Blais and Protter (2010) examine the structure of the supply curve using
tick data. They find that for highly liquid stocks, the supply curve is
effectively linear, with a slope that varies with time. Their empirical
analysis also indicates the slope has a small variance. This supports we use
a time-varying liquidity impact $\eta(t)$. Empirical investigations (Jones,
Kaul and Lipson (1994)) reveal a significantly positive relation between
trade size, volume of transactions and stock volatility. If from time $t$ to
$T$, liquidity of the stock is mainly provided by the trader. There is a
positive relation between the volatility and speed of liquidity. The average
speed of liquidity is fixed from $0$ to $T$, i.e., $x/T$. So there is a
negative relation between the volatility at time $t$ and the average speed
of liquidity from $t$ to $T$, i.e., $X(t)/(T-t)$. This supports our
assumption $\sigma(t)=\sqrt{\frac{\xi(t)}{ X(t)/(T-t)}}$.

We look for a solution of the form:
\begin{equation}
\begin{array}{lll}
C & = & x^{2}D(t)+\xi ^{2}E(t)+x\xi F(t)+xG(t)+\xi H(t)+I(t), \\
f & = & x^{2}L(t)+\xi ^{2}M(t)+x\xi N(t)+xO(t)+\xi P(t)+Q(t),%
\end{array}
\label{eq4cf2}
\end{equation}%
with
\[
C\sim \frac{\eta (t)x^{2}}{T-t}+O(T-t),f\sim \frac{\eta (t)x^{2}}{T-t}%
+O(T-t),\ T-t\rightarrow 0.
\]%
Consequently, we obtain a system of ODEs.
\begin{equation}
\begin{array}{llll}
-\frac{dD}{dt} & = & -\frac{1}{\eta }D^{2}+\mu b_{\xi }^{2}N^{2}, & D\sim
\frac{\eta }{T-t},t\rightarrow T, \\
-\frac{dE}{dt} & = & -\frac{1}{4\eta }F^{2}+4\mu b_{\xi }^{2}M^{2}, & E(T)=0,
\\
-\frac{dF}{dt} & = & -\frac{1}{\eta }DF+4\mu b_{\xi }^{2}MN+\mu (T-t), &
F(T)=0, \\
-\frac{dG}{dt} & = & -\frac{1}{\eta }DG+2\mu b_{\xi }^{2}NP+a_{\xi }F, &
G(T)=0, \\
-\frac{dH}{dt} & = & -\frac{1}{2\eta }FG+4\mu b_{\xi }^{2}MP+2a_{\xi }E, &
H(T)=0, \\
-\frac{dI}{dt} & = & b_{\xi }^{2}E-\frac{1}{4\eta }G^{2}+\mu b_{\xi
}^{2}P^{2}+a_{\xi }H, & I(T)=0, \\
-\frac{dL}{dt} & = & \frac{1}{\eta }D^{2}-\frac{2}{\eta }DL, & L\sim \frac{%
\eta }{T-t},t\rightarrow T, \\
-\frac{dM}{dt} & = & \frac{1}{4\eta }F^{2}-\frac{1}{2\eta }FN, & M(T)=0, \\
-\frac{dN}{dt} & = & \frac{1}{\eta }DF-\frac{1}{\eta }(DN+FL), & N(T)=0, \\
-\frac{dO}{dt} & = & \frac{1}{\eta }(DG-DO-GL)+a_{\xi }N, & O(T)=0, \\
-\frac{dP}{dt} & = & \frac{1}{2\eta }(FG-FO-GN)+2a_{\xi }M, & P(T)=0, \\
-\frac{dQ}{dt} & = & \frac{1}{4\eta }G^{2}-\frac{1}{2\eta }GO+a_{\xi
}P+b_{\xi }^{2}M, & Q(T)=0.%
\end{array}%
\end{equation}

The optimal strategy becomes
\[
\upsilon^*(t, X(t), \xi(t))=\frac{1}{2\eta(t)}(2D(t)X(t)+F(t)\xi(t)+G(t)).
\]
\end{example}

\section{Numerical Illustrations}

In this section, we give numerical examples of our three models for
quantitative trading with dynamic mean variance criterion. Our trading
target is to buying 100 shares of stock ($x=100$) within a week (5 working
days, i.e. $T=5$) and we set parameter $\mu=1$.

Figure 1 gives the trade strategy in the basic model with various constant
values for volatility and liquidity impact.

\begin{figure}[]
\caption{Trade Strategy in the Basic Model with various values for
volatility and liquidity Impact}
\begin{center}
\subfigure[]{
\resizebox*{13cm}{!}{\includegraphics{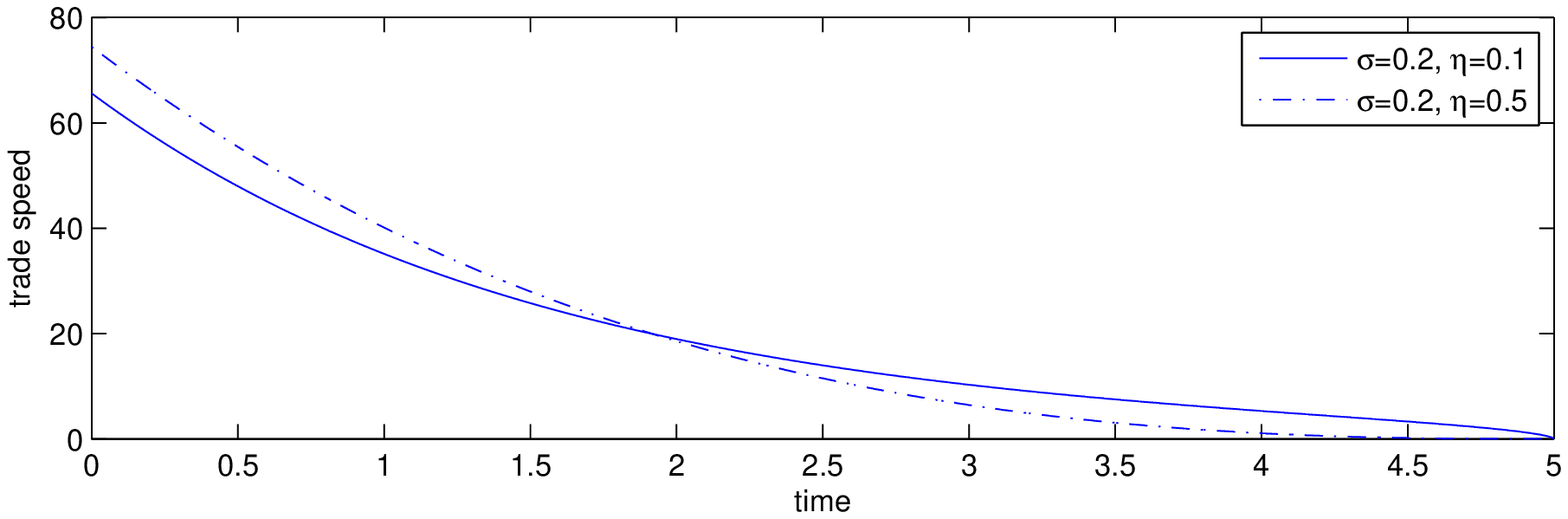}}}
\subfigure[]{
\resizebox*{13cm}{!}{\includegraphics{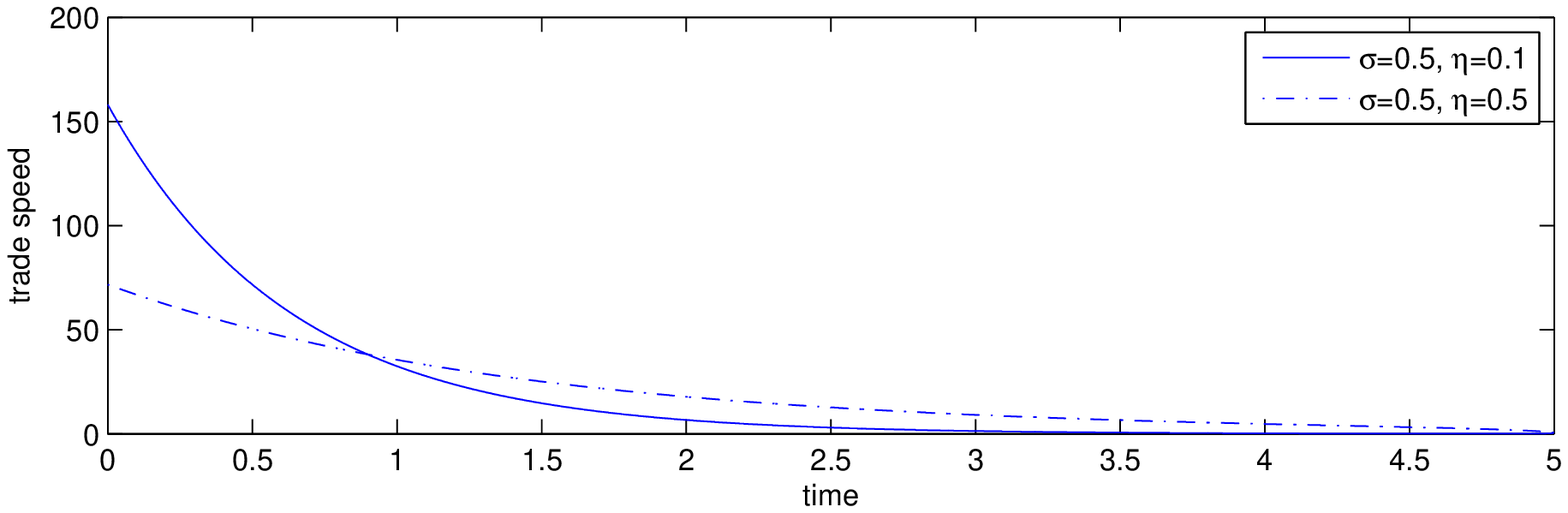}}}
\end{center}
\end{figure}

In Figure 2, we present the trading strategy on four simulated paths of
stock price and pricing signals. For simplicity we assume that the
time-varying parameters to be constant and we summarize the values for
various parameters as $S_0=\$100, \sigma_1=\sigma_2=0.5, \eta=0.1$. Figures
(2a), (2c), (2e) gives the trading speed in the case $\alpha_0=\$102,
\theta=0.2$, Figures (2b), (2d), (2f) gives that in the case $\alpha_0=\$98,
\theta=0.2$, Figures (3a), (3c), (3e) gives that in the case $%
\alpha_0=\$102, \theta=0.05$ while Figures (3b), (3d), (3f) gives that in
the case $\alpha_0=\$102, \theta=0.05$. When $\theta =0$, the model with
random signal degenerates to be the basic one.

\begin{figure}[]
\caption{Trade Strategy with Random Pricing Signals(I)}
\begin{center}
\subfigure[]{
\resizebox*{7.8cm}{!}{\includegraphics{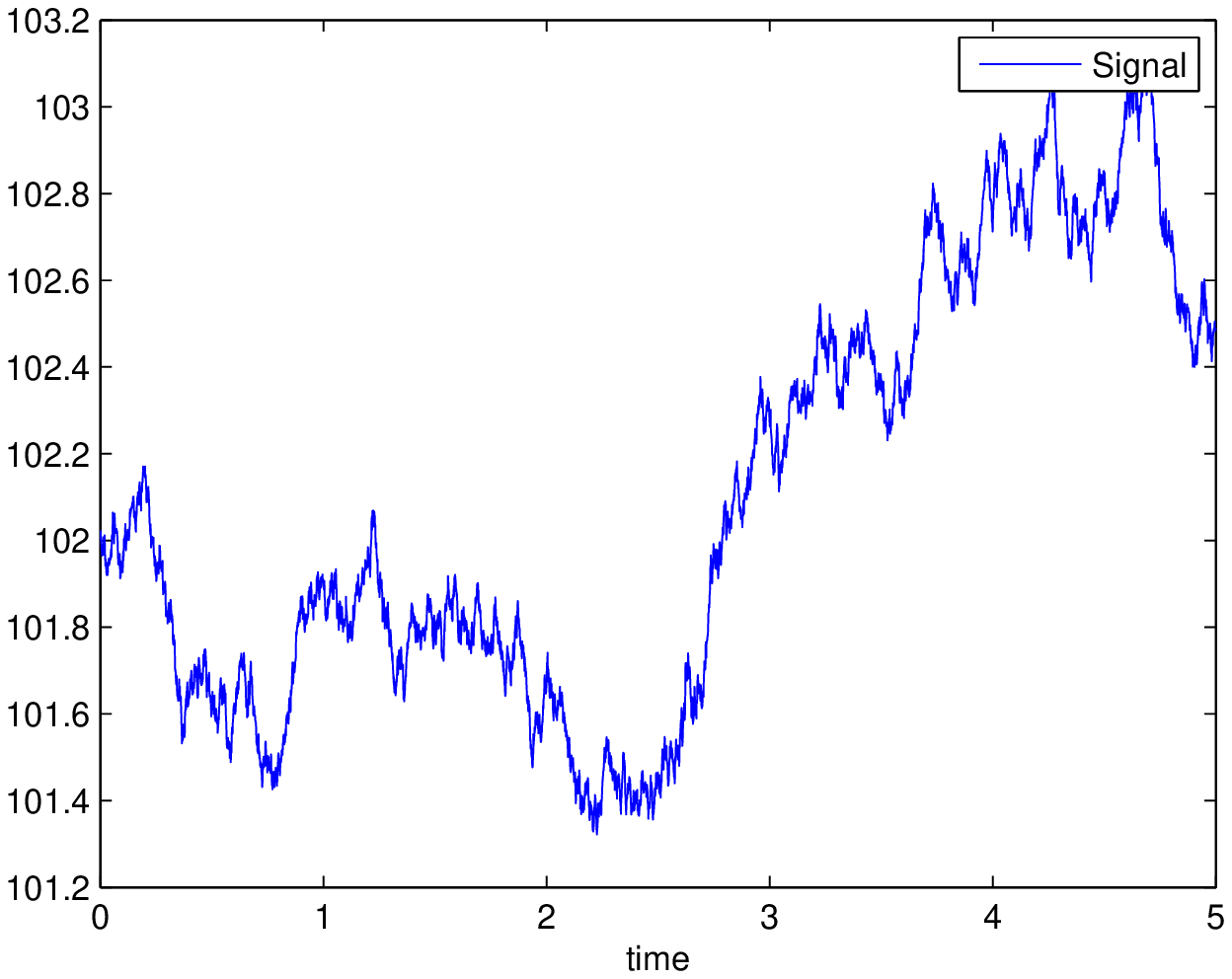}}}
\subfigure[]{
\resizebox*{7.8cm}{!}{\includegraphics{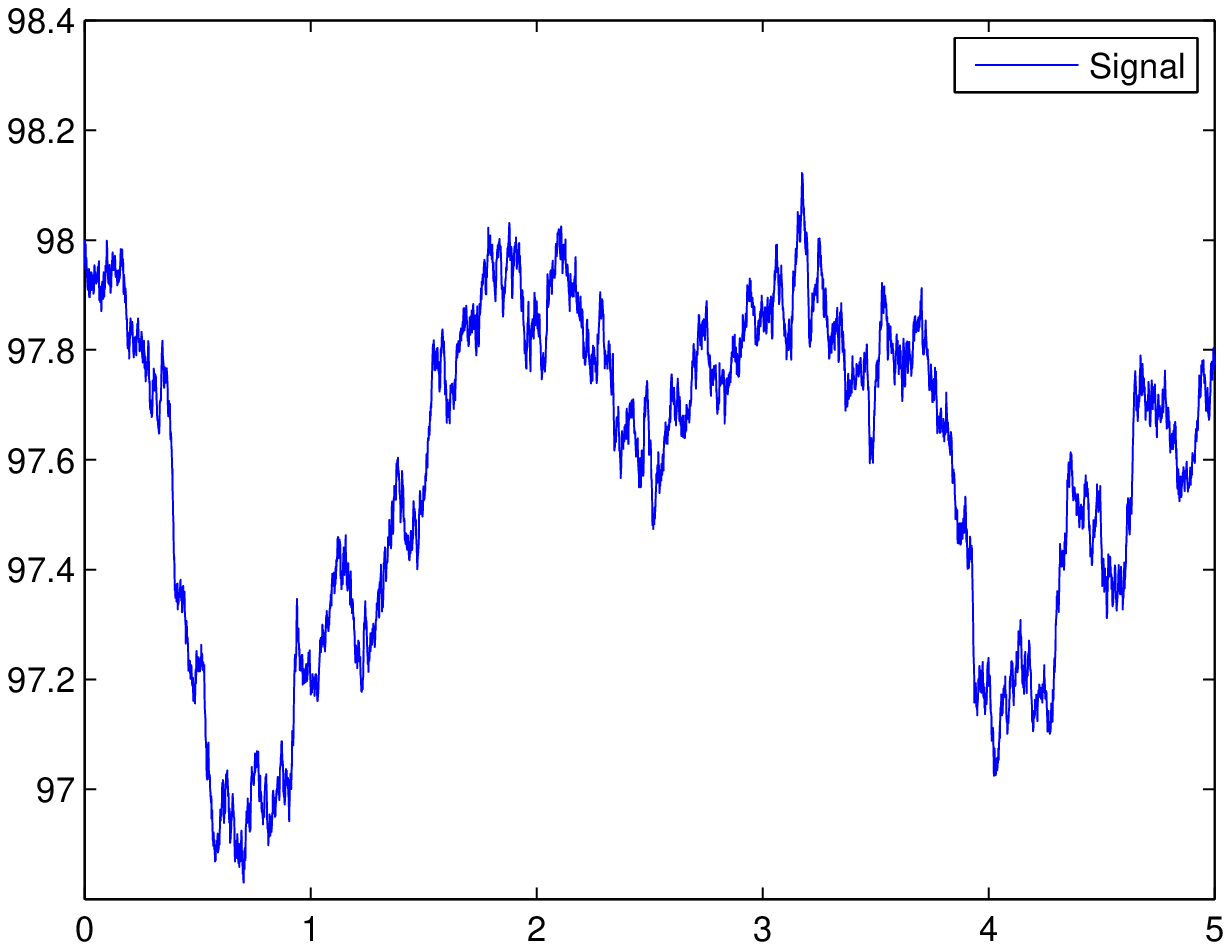}}}\\[0pt]
\subfigure[]{
\resizebox*{7.8cm}{!}{\includegraphics{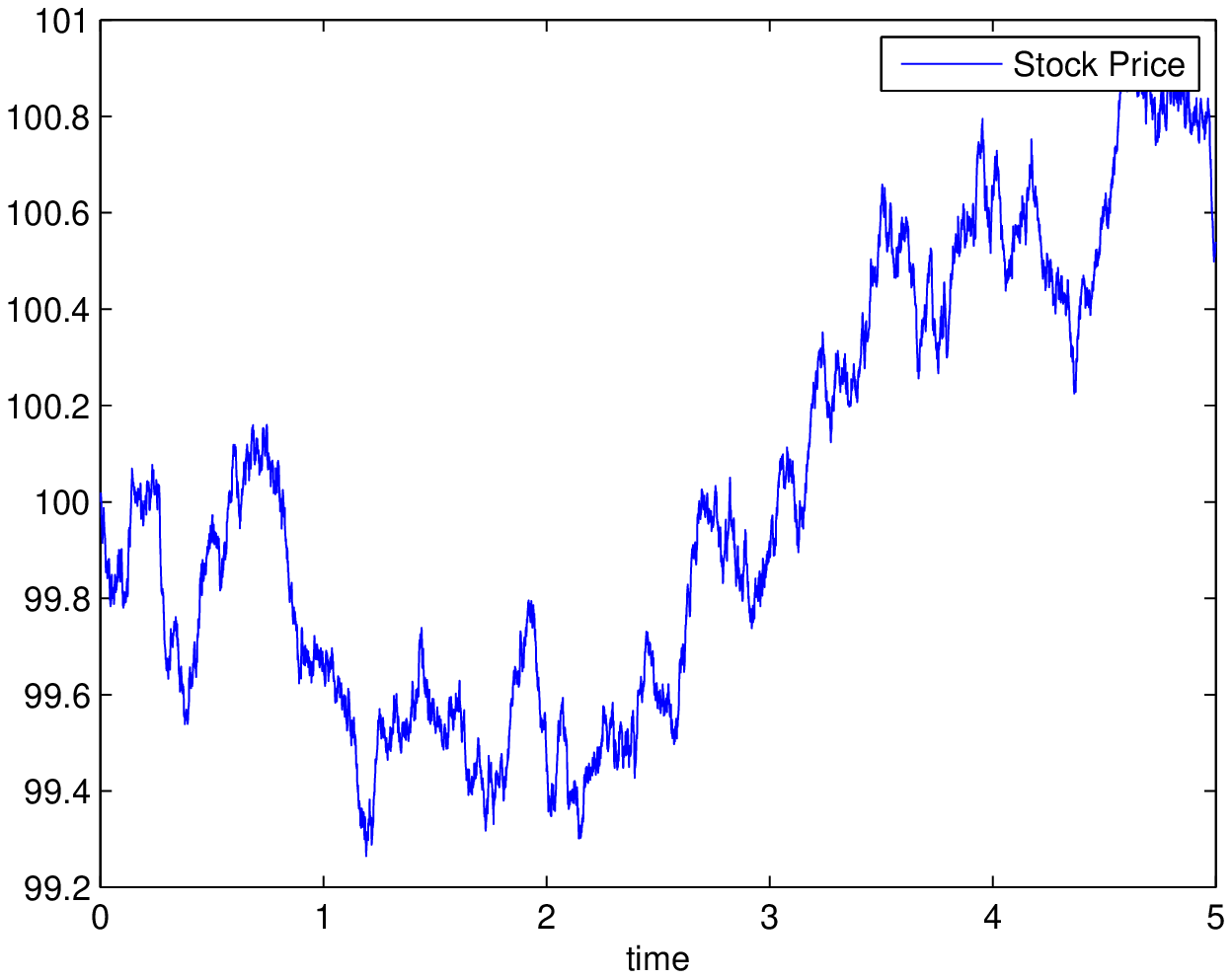}}}%
\subfigure[]{
\resizebox*{7.8cm}{!}{\includegraphics{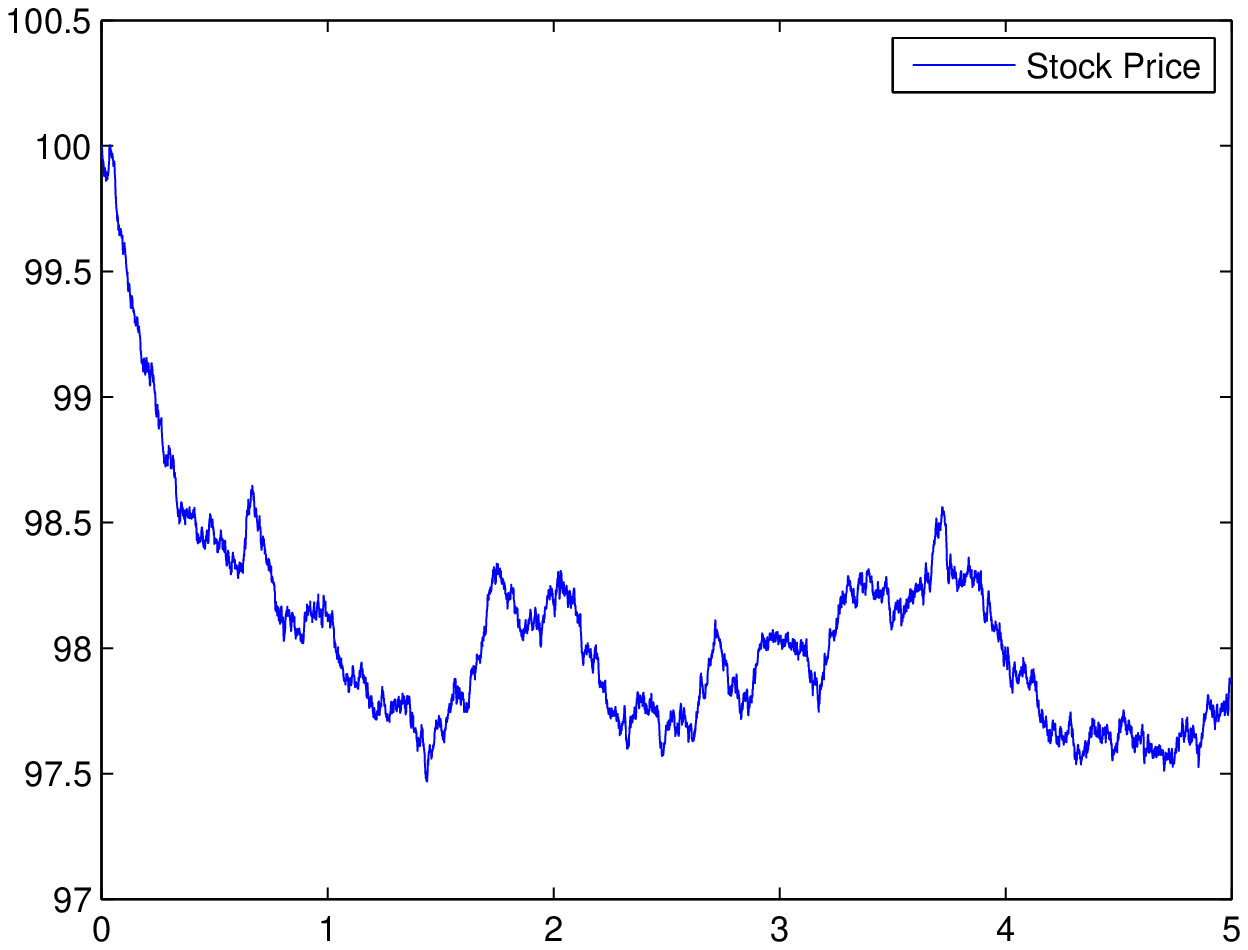}}}\\[0pt]
\subfigure[]{
\resizebox*{7.8cm}{!}{\includegraphics{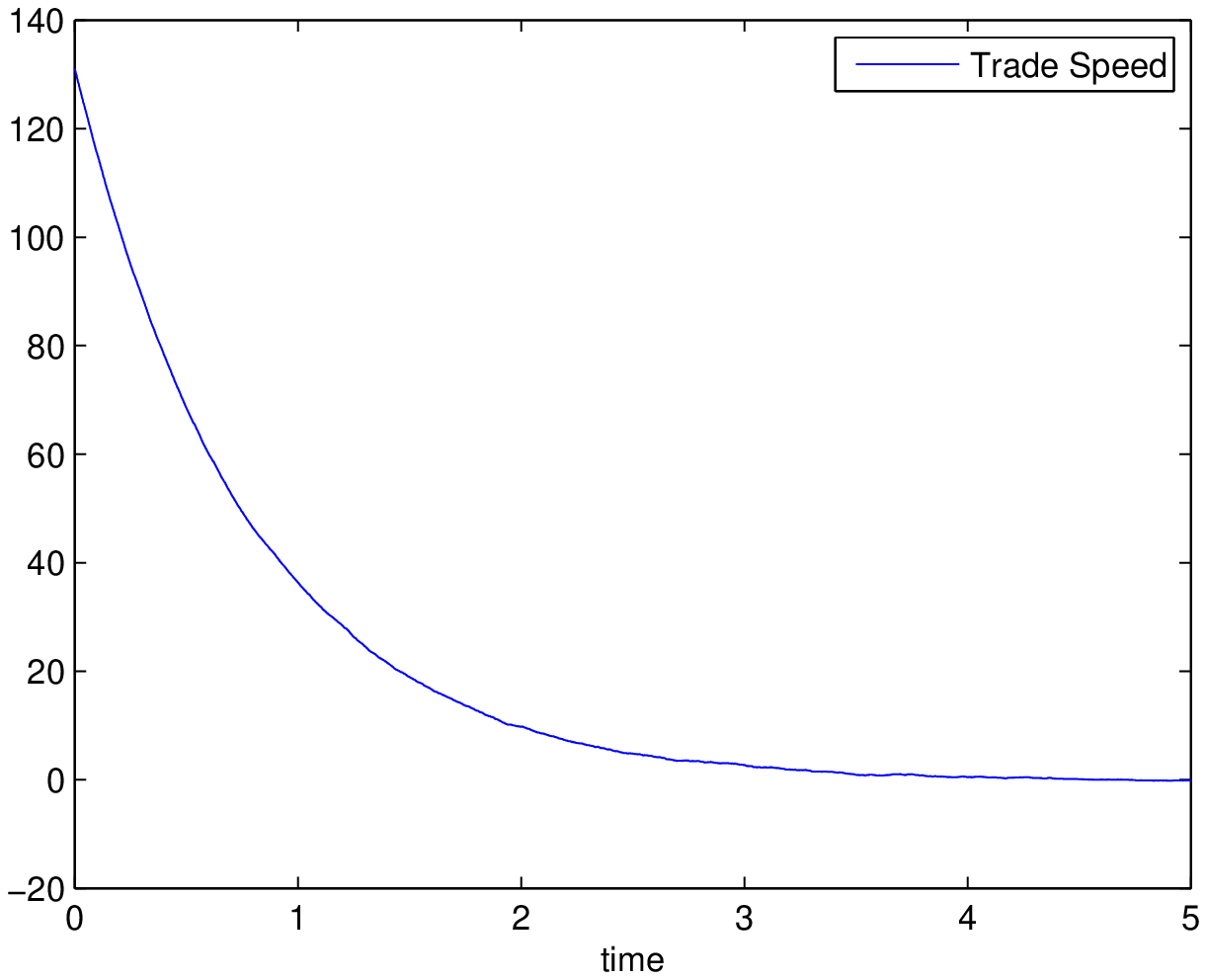}}}%
\subfigure[]{
\resizebox*{7.8cm}{!}{\includegraphics{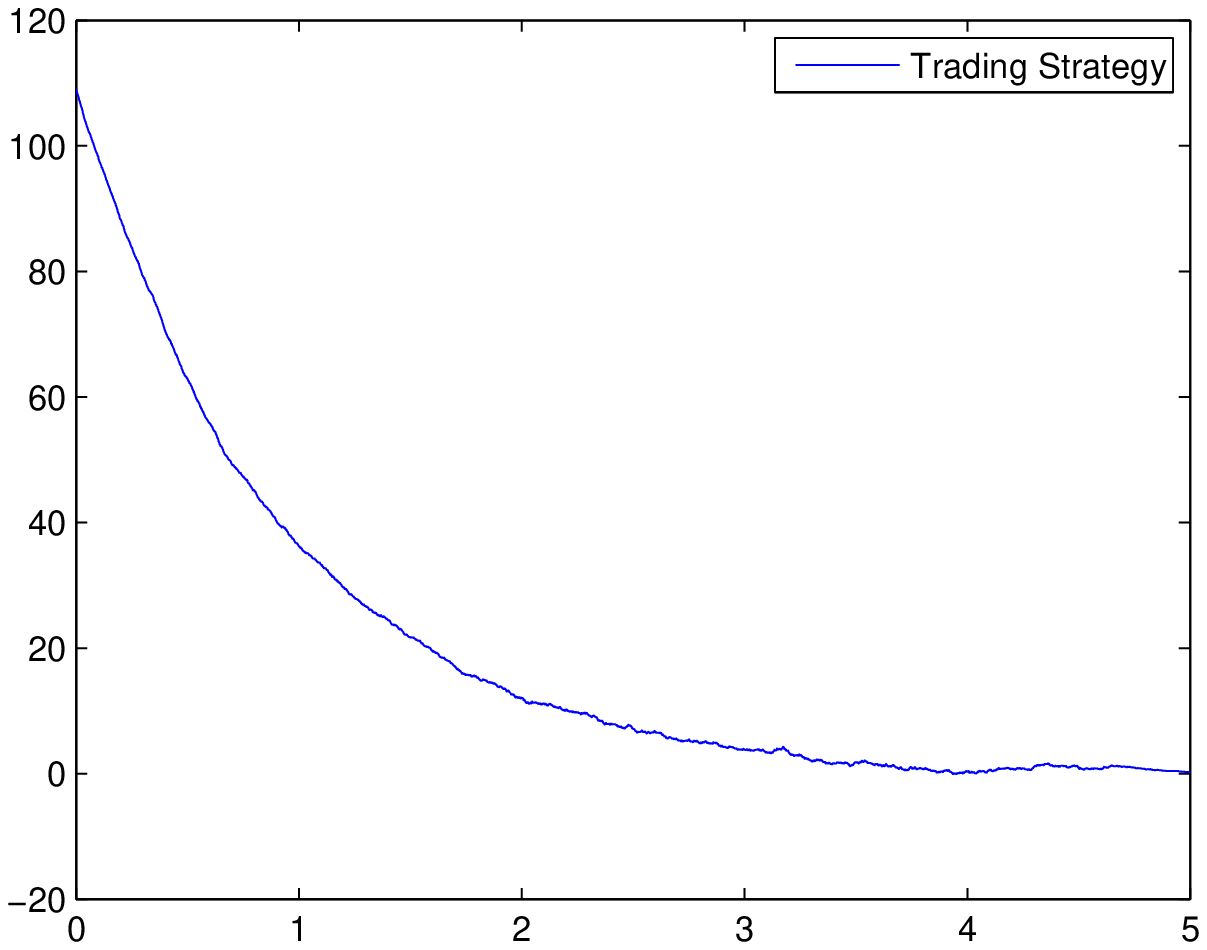}}}\\[0pt]
\end{center}
\end{figure}

\begin{figure}[]
\caption{Trade Strategy with Random Pricing Signals(II)}
\begin{center}
\subfigure[]{
\resizebox*{7.8cm}{!}{\includegraphics{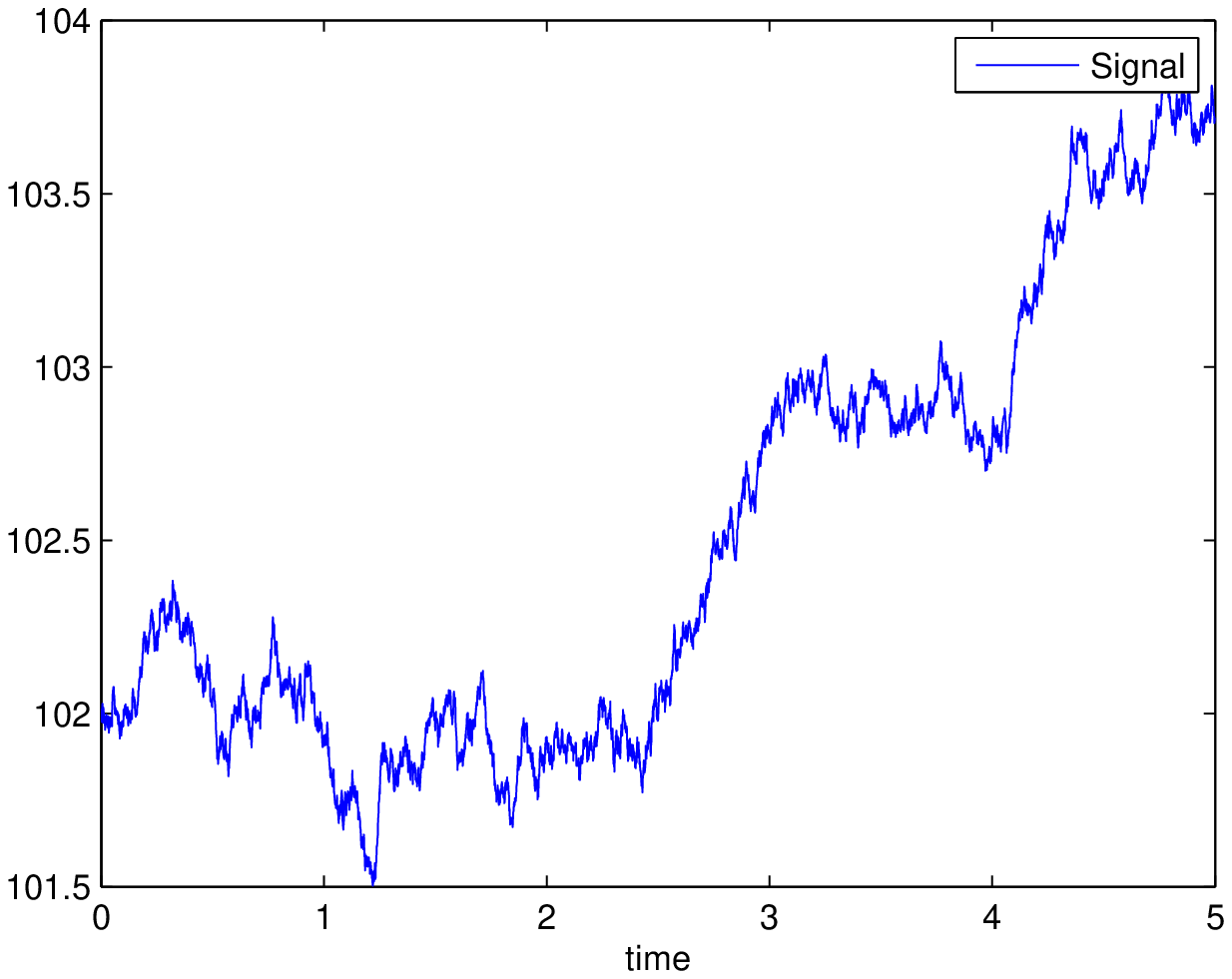}}}
\subfigure[]{
\resizebox*{7.8cm}{!}{\includegraphics{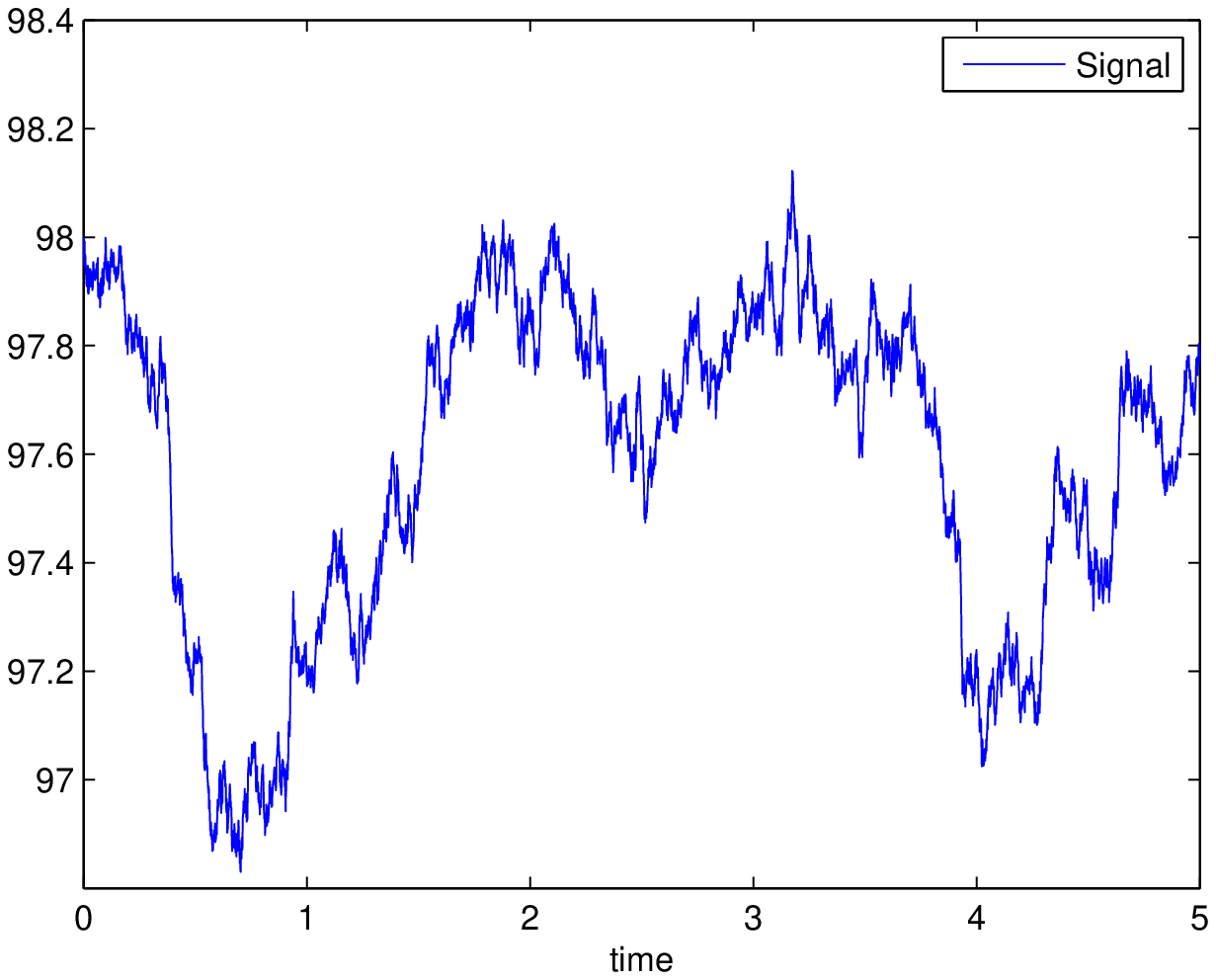}}}\\[0pt]
\subfigure[]{
\resizebox*{7.8cm}{!}{\includegraphics{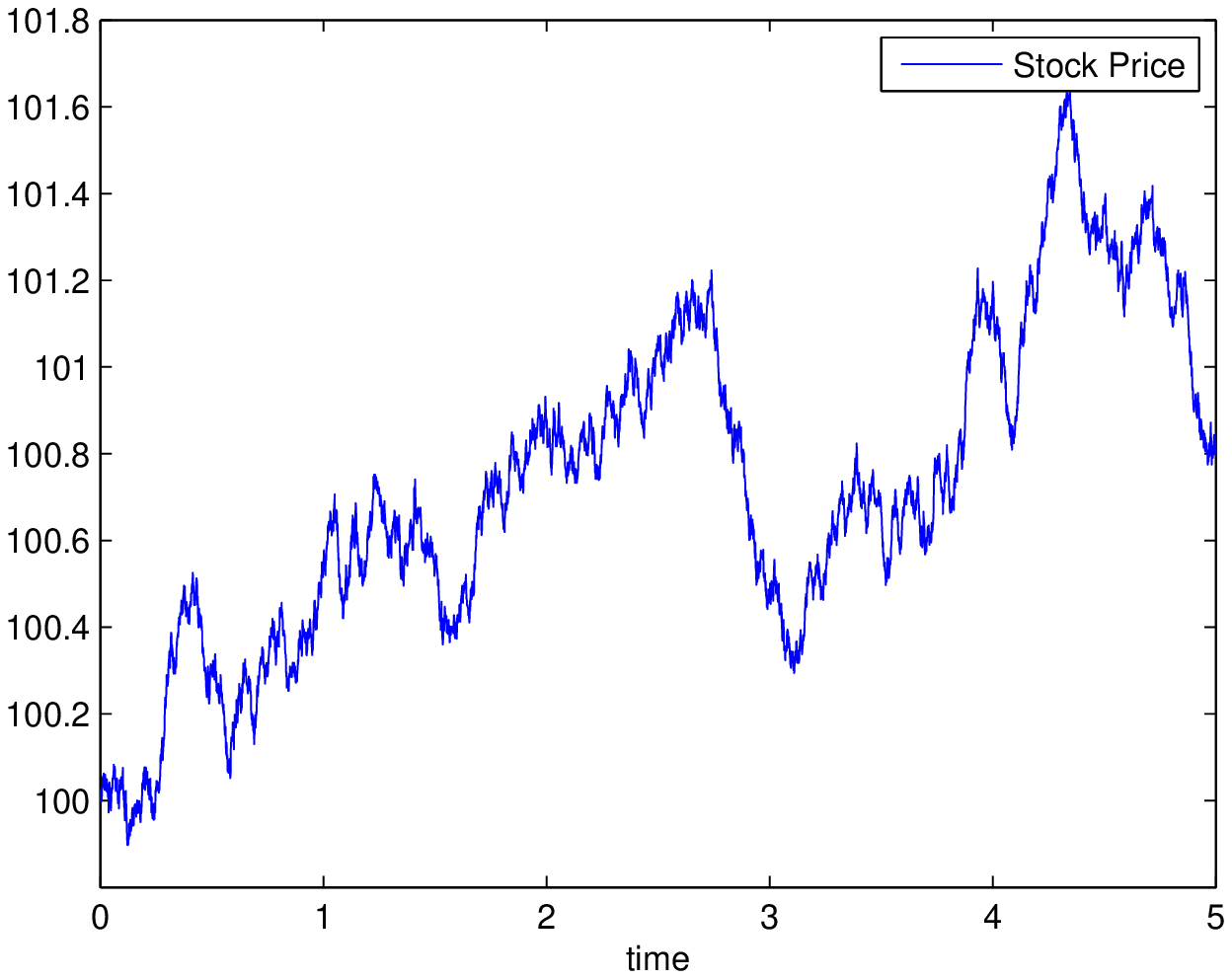}}}%
\subfigure[]{
\resizebox*{7.8cm}{!}{\includegraphics{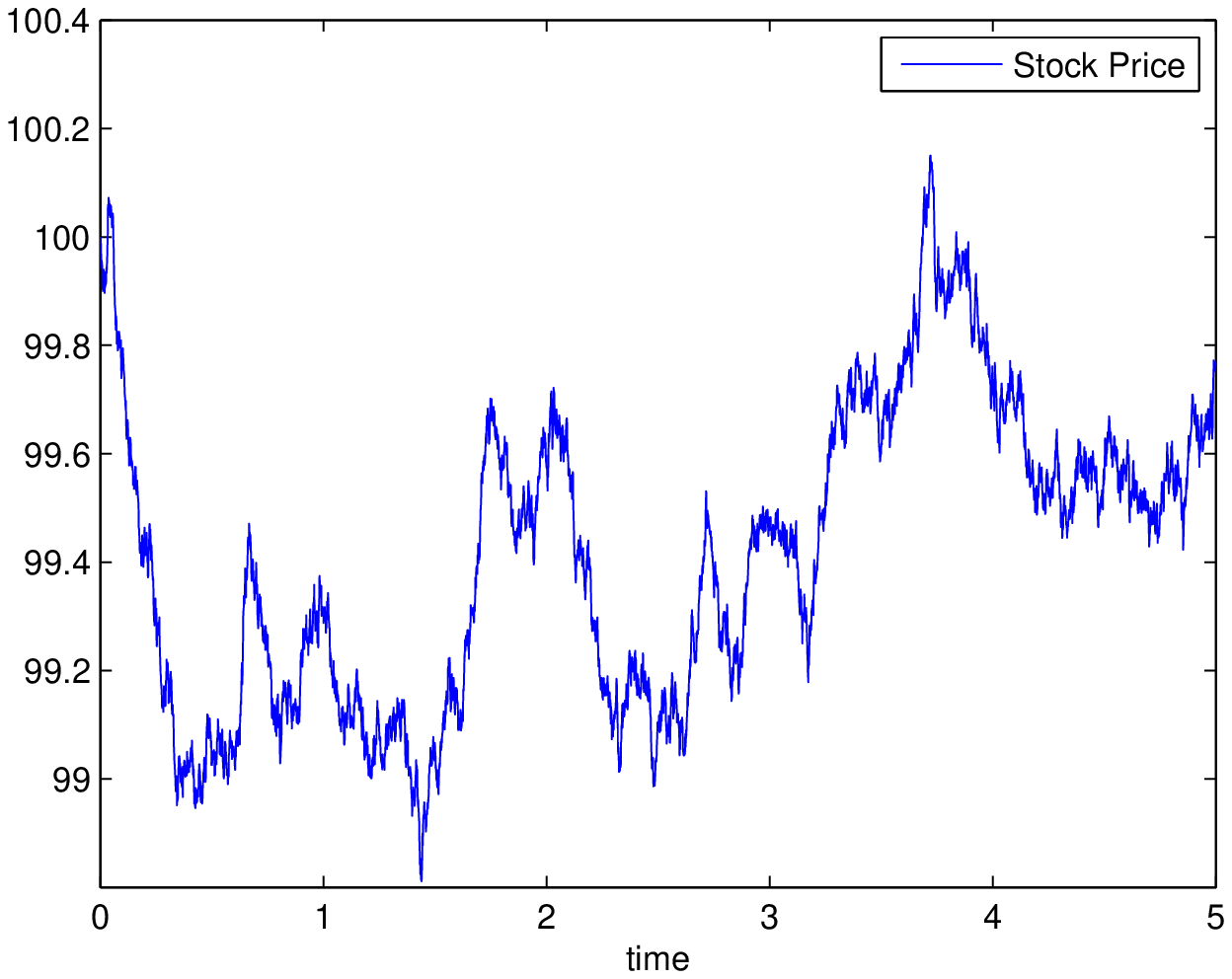}}}\\[0pt]
\subfigure[]{
\resizebox*{7.8cm}{!}{\includegraphics{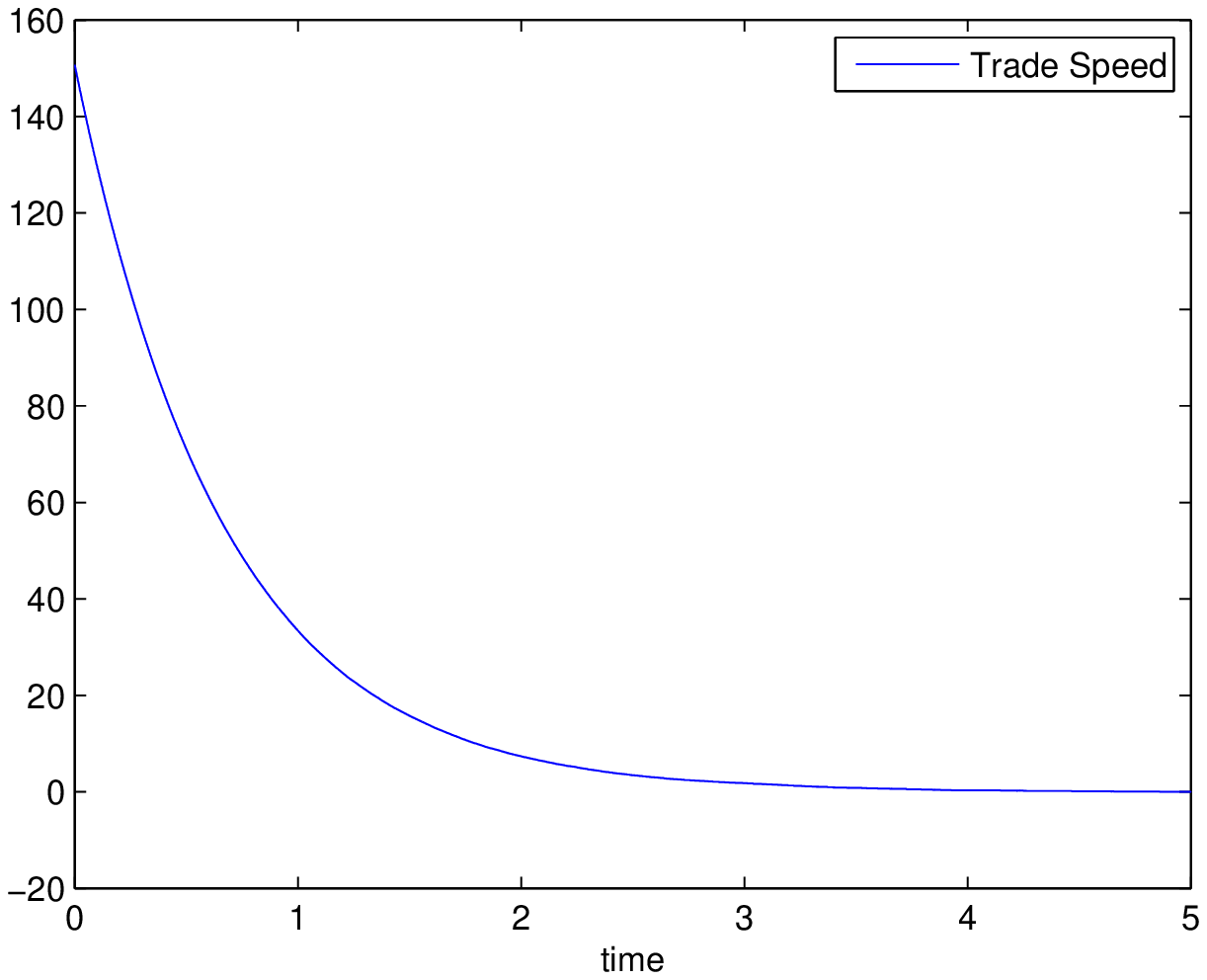}}}%
\subfigure[]{
\resizebox*{7.8cm}{!}{\includegraphics{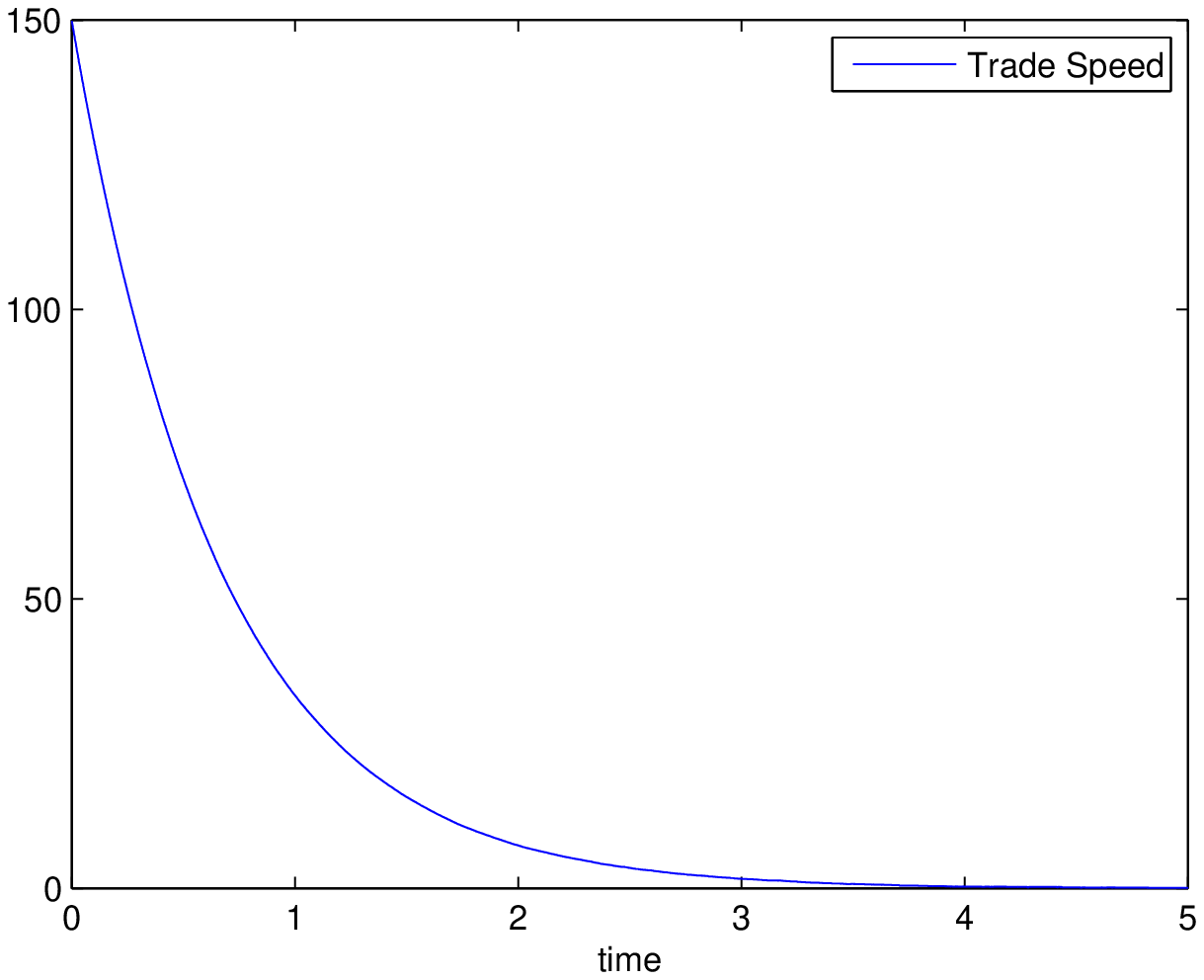}}}
\end{center}
\end{figure}

Figure 3 shows the trading strategy in the stochastic volatility model
(example 1). For simplicity, we assume $a_{\xi}$ and $b_{\xi}$ are constants
and we summarize the values of the parameters are as follows $a_{\xi}=0,
b_{\xi}=0.1, \eta=0.1, \xi_0=1, S_0=\$100$.

\begin{figure}[]
\caption{Trade Strategy in Stochastic Volatility Model}
\begin{center}
\subfigure[]{
\resizebox*{13cm}{!}{\includegraphics{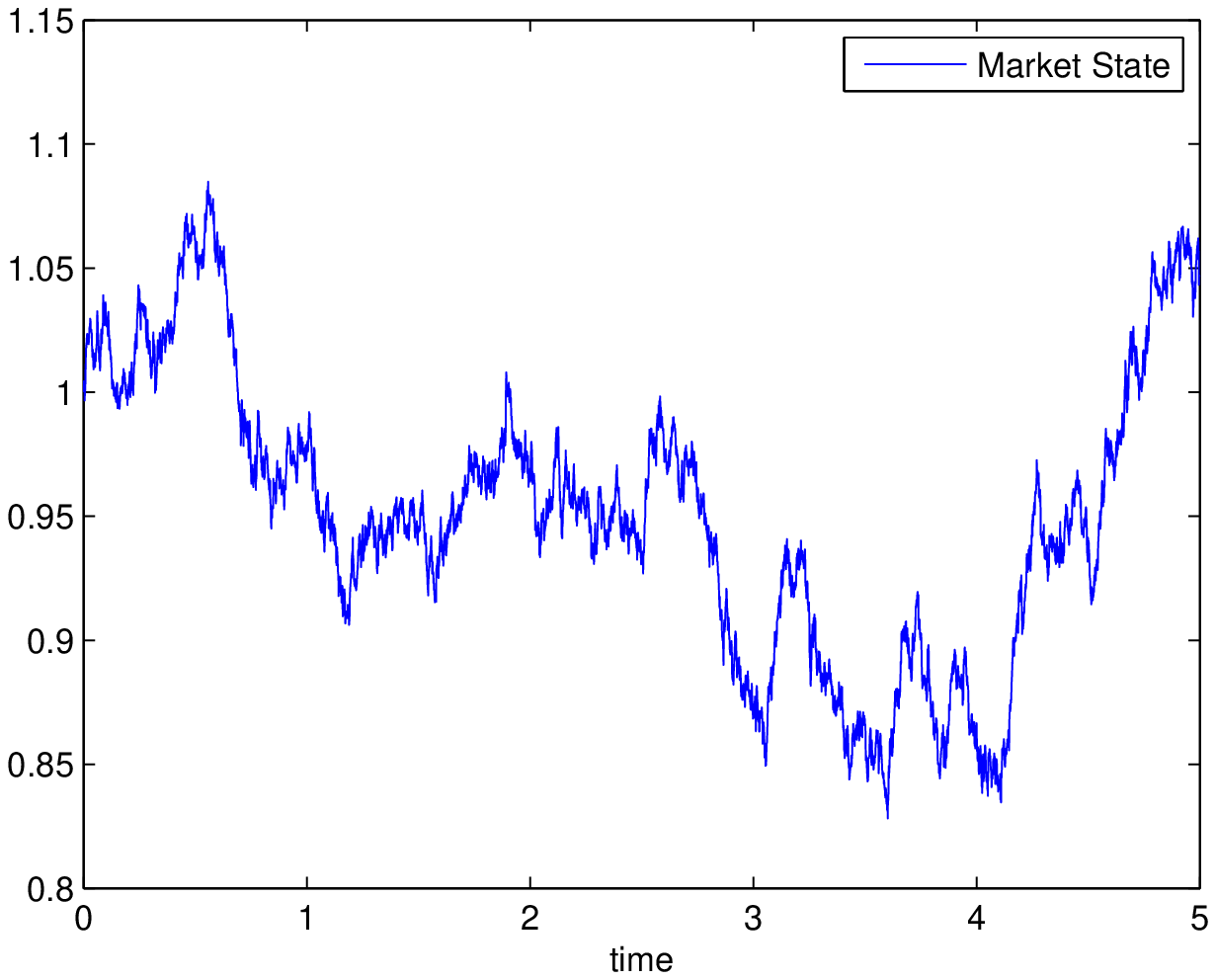}}}
\subfigure[]{
\resizebox*{13cm}{!}{\includegraphics{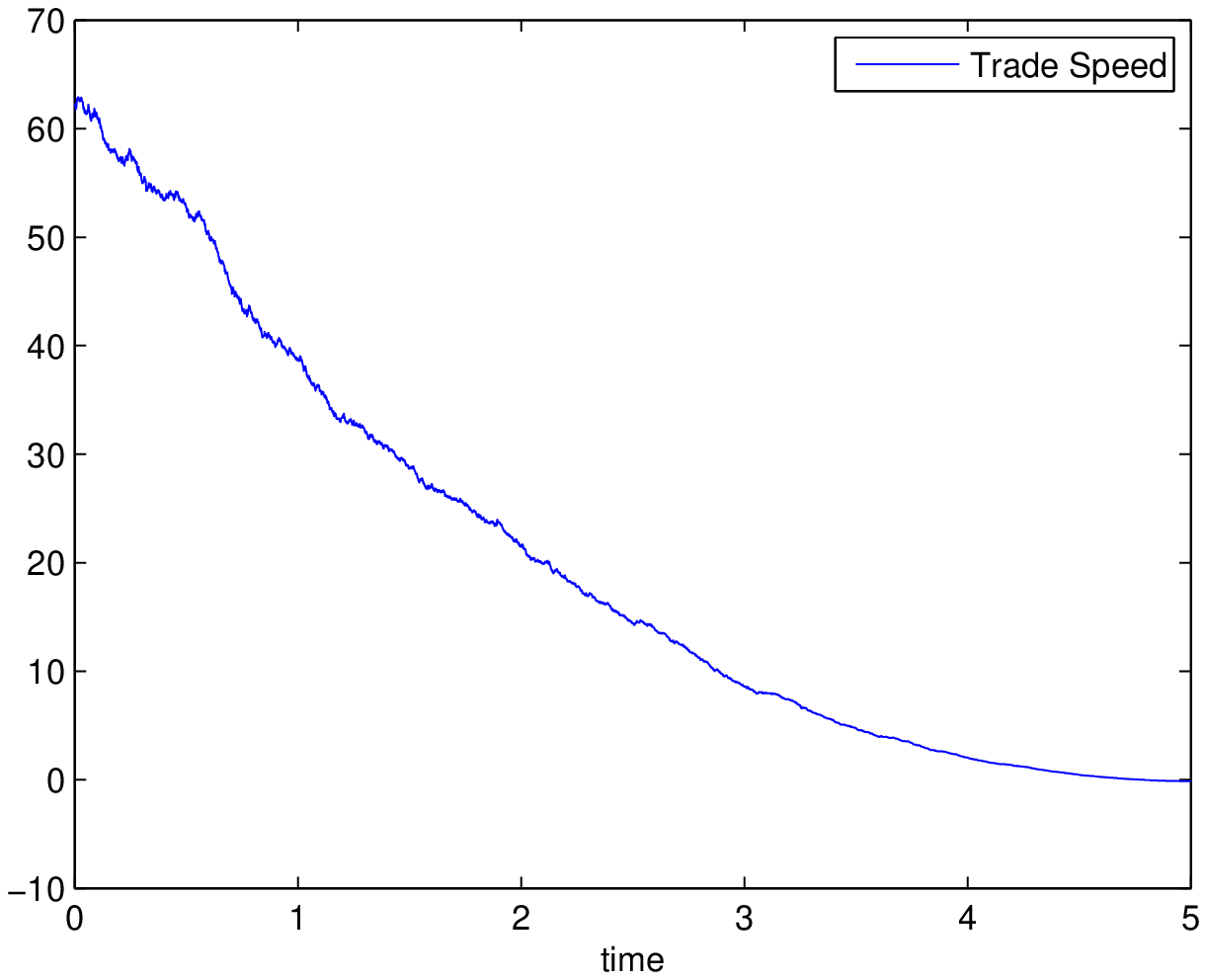}}}
\end{center}
\end{figure}

\section{Conclusions}

In this paper, we consider the quantitative trading problem under the
dynamic mean-variance criterion and derive time-consistent solutions in
three important models. We give a optimal strategy under a reconsidered mean-variance subject at any point in time.
 We also get an explicit trading
strategy when random pricing signals are incorporated. When consider
stochastic liquidity and volatility, we give the exact HJB equations. We
obtain an explicit solution in stochastic volatility model with a given
structure supported by empirical study.

\end{document}